\newtheorem{thm}{Theorem}
\newtheorem{prop}{Proposition}
\theoremstyle{definition}
\newtheorem{defn}{Definition}
\theoremstyle{remark}
\title[Existence theorem for weak quasiperiodic solutions]
{Existence theorem\\ for weak quasiperiodic solutions\\ of Lagrangian systems\\
on Riemannian manifolds}
\author[I. Parasyuk]
{Igor Parasyuk}
\address{National Taras Shevchenko Univesity of
Kyiv\\ Volodymyrs'ka 64, Kyiv, 01033, Ukraine}
\author[A. Rustamova]
{Anna Rustamova} \email{pio@univ.kiev.ua,\;anna\_rustamova@hotmail.com}
\begin{document}

 \begin{abstract}We establish new sufficient conditions for the existence of weak Besicovitch
quasiperiodic solutions for natural Lagrangian system on Riemannian manifold
with time-quasiperiodic force function.\end{abstract}

\maketitle

\textbf{1. Introduction.} Let $\mathcal{M}$ be a smooth complete connected
$m$-dimensional Riemannian manifold equipped with an inner product $\left\langle
\cdot,\cdot\right\rangle $ on fibers $T_{x}\mathcal{M}$ of tangent bundle $T\mathcal{M}$. Consider a
natural system on $\mathcal{M}$  with Lagrangian function of the form
$L\bigl|_{T_{x}\mathcal{M}}=\frac{1}{2}\langle\dot{x},\dot{x}\rangle-\Pi(t,x)$ where
$\frac{1}{2}\langle\dot{x},\dot{x}\rangle$ and $\Pi(t,x)$ stand for kinetic and potential
energy respectively. We suppose that the potential energy is represented as
$\Pi:=-W(\omega t,x)$ where $W(\omega t,x)$ is $\omega$-quasiperiodic force function
generated by a function
$W(\cdot,\cdot)\in\mathrm{C}^{0,2}(\mathbb{T}^{k}\times\mathcal{M}\!\mapsto\!\mathbb{R})$ ($W(\cdot,\cdot)$ is
continuous together with $W_{xx}^{\prime\prime}(\cdot,\cdot)$); here
$\mathbb{T}^{k}=\mathbb{R}^{k}/2\pi\mathbb{Z}^{k}$ is $k$-dimensional torus and
$\omega=(\omega_{1},...,\omega_{k})\in\mathbb{R}^{k}$ is a frequencies vector with
rationally independent components. The problem is to detect in such a system
$\omega$-quasiperiodic oscillations.

J.~ Blot in his series of papers \cite{Blo88,Blo89,Blo89A,Blo93}
applied variational method to establish the existence of weak almost
periodic solutions for systems in $\mathbb{E}^{m}$. Later, this method
was used in \cite{BerZha96,Maw98,ZakPar99,Ort09} to prove the existence
of weak and classical almost periodic solutions for systems of variational
type. In \cite{ZakPar99A,ZakPar99B}, weak and classical quasiperiodic
solutions were found for natural mechanical systems in convex compact
subsets of Riemannian manifolds with non-positive sectional curvature.
The goal of the present paper is to extend these results to natural
systems on arbitrary Riemannian manifolds.

\medskip{}

\textbf{2. Variational method.} One can interpret a natural system on $\mathcal{M}$ as
a natural system in Euclidean space $\mathbb{E}^{n}$ (of appropriate dimension
$n$) with holonomic constraint. Namely, in view of the Nash embedding theorem
\cite{Nash56} we consider $\mathcal{M}$ as a submanifold of $\mathbb{E}^{n}$ for some
natural $n>m$. The set $\mathcal{M}\subset\mathbb{E}^{n}$ play the role of holonomic
constraint for natural system in $\mathbb{E}^{n}$ with kinetic energy
$K=\frac{1}{2}\langle\dot{y},\dot{y}\rangle_{\mathbb{E}^{n}}$ and potential energy $-W(\omega
t,y)$, if we suppose that $W(\cdot,\cdot)$ is defined in
$\mathbb{T}^{k}\times\mathbb{E}^{n}$.

In what follows we shall use identical notations for inner product
$\left\langle \cdot,\cdot\right\rangle_{\mathbb{E}^{n}}$ of $\mathbb{E}^{n}$
and the induced inner product $\left\langle \cdot,\cdot\right\rangle $
on $T\mathcal{M}$. Let $\nabla_{\xi}$ stands for the covariant differentiation
of Levi-Civita connection in the direction of vector $\xi\in T\mathcal{M}$,
and let $\nabla f$ stands for gradient vector field of a scalar function
$f(\cdot):\mathcal{M}\mapsto\mathbb{R}$, i.e $\left\langle \nabla f(x),\xi\right\rangle =\mathrm{d}f(x)(\xi)$
for any $\xi\in T_{x}\mathcal{M}$.

Denote by $\mathrm{H}(\mathbb{T}^{k}\!\mapsto\!\mathbb{E}^{n})$ the
space of $\mathbb{E}^{n}$-valued functions on $k$-torus which are
integrable with the square of Euclidean norm $\left\Vert \cdot\right\Vert :=\sqrt{\left\langle \cdot,\cdot\right\rangle}$.
Define on $\mathrm{H}(\mathbb{T}^{k}\!\mapsto\!\mathbb{E}^{n})$ the
standard scalar product $\langle\cdot,\cdot\rangle_{0}=(2\pi)^{-k}\int_{\mathbb{T}^{k}}\langle\cdot,\cdot\rangle\mathrm{d}\varphi$
and the corresponding semi-norm $\left\Vert \cdot\right\Vert_{0}:=\sqrt{\left\langle \cdot,\cdot\right\rangle_{0}}$.
By $\mathrm{H}_{\omega}^{1}(\mathbb{T}^{k}\!\mapsto\!\mathbb{E}^{n})$
denote the space of functions $f(\cdot)\in\mathrm{H}(\mathbb{T}^{k}\!\mapsto\!\mathbb{E}^{n})$
each of which has weak (Sobolev) derivative $D_{\omega}f(\cdot)\in\mathrm{H}(\mathbb{T}^{k}\!\mapsto\!\mathbb{E}^{n})$
in the direction of vector $\omega$. Recall that a function $u(\cdot)\in\mathrm{H}(\mathbb{T}^{k}\!\mapsto\!\mathbb{E}^{n})$
with Fourier series $\sum_{\mathbf{n}\in\mathbb{Z}^{k}}u_{\mathbf{n}}\mathrm{e}^{\mathrm{i}\mathbf{n}\cdot\varphi}$
has a weak derivative iff the series $\sum_{\mathbf{n}\in\mathbb{Z}^{k}}\left|\mathbf{n}\cdot\omega\right|^{2}\left\Vert u_{\mathbf{n}}\right\Vert^{2}$
converges and then the Fourier series of $D_{\omega}u(\cdot)$ is
$\sum_{\mathbf{n}\in\mathbb{Z}^{k}}\mathrm{i}(\mathbf{n}\cdot\omega)u_{\mathbf{n}}\mathrm{e}^{\mathrm{i}\mathbf{n}\cdot\varphi}$.

The space $\mathrm{H}_{\omega}^{1}(\mathbb{T}^{k}\!\mapsto\!\mathbb{E}^{n})$
is equipped with the semi-norm $\left\Vert \cdot\right\Vert_{1}$
generated by the scalar product $\left\langle D_{\omega}\cdot,D_{\omega}\cdot\right\rangle_{0}+\left\langle \cdot,\cdot\right\rangle_{0}$.
After identification of functions coinciding a.e., both spaces becomes
Hilbert spaces with norms $\left\Vert \cdot\right\Vert_{0}$ and
$\left\Vert \cdot\right\Vert_{1}$ respectively.

To any function $u(\cdot)\in\mathrm{H}(\mathbb{T}^{k}\!\mapsto\!\mathbb{E}^{n})$
with Fourier series $\sum_{\mathbf{n}\in\mathbb{Z}^{k}}u_{\mathbf{n}}\mathrm{e}^{\mathrm{i}\mathbf{n}\cdot\varphi}$,
one can put into correspondence a Besicovitch quasiperiodic function
$x(t)=u(\omega t)$ defined by its Fourier series $\sum_{\mathbf{n}\in\mathbb{Z}^{k}}u_{\mathbf{n}}\mathrm{e}^{\mathrm{i}(\mathbf{n}\cdot\omega)t}$.
If $u(\cdot)\in\mathrm{H}_{\omega}^{1}(\mathbb{T}^{k}\!\mapsto\!\mathbb{E}^{n})$
then $\dot{x}(t)$ denotes a Besicovitch quasiperiodic function $D_{\omega}u(\omega t)$.

We define weak solution of Lagrangian system on $\mathcal{M}$ with
density $L=\frac{1}{2}\left\langle \dot{x},\dot{x}\right\rangle +W(\omega t,x)$
in a slightly different way then in \cite{ZakPar99}. First, for any
bounded subset $\mathcal{A}\subseteq\mathcal{M}$, put \[
\mathcal{S}_{\mathcal{A}}:=\mathrm{C}^{\infty}\left(\mathbb{T}^{k}\!\mapsto\!\mathcal{A}\right).\]
 Observe that if $u_{j}(\cdot)\in\mathcal{S}_{\mathcal{A}}$ is a
sequence bounded in $\mathrm{H}_{\omega}^{1}(\mathbb{T}^{k}\!\mapsto\!\mathbb{E}^{n})$ and
convergent to a function $u(\cdot)$ by norm of the space
$\mathrm{H}(\mathbb{T}^{k}\!\mapsto\!\mathbb{E}^{n})$ (recall that we consider the set
$\mathcal{A}\subseteq\mathcal{M}$ both as a subset of $\mathbb{E}^{n}$), then for any
$\mathbf{n}\in\mathbb{Z}^{k}$ the sequence of Fourier series coefficients
$u_{\mathbf{n},j}$ converges to $u_{\mathbf{n}}$ and for some $K>0$ we have \begin{gather*}
\sum_{\left|\mathbf{n}\right|\le N}\left|\mathbf{n}\cdot\omega\right|^{2}\left\Vert u_{\mathbf{n}}\right\Vert^{2}=
\lim_{j\to\infty}\sum_{\left|\mathbf{n}\right|\le N}\left|\mathbf{n}\cdot\omega\right|^{2}\left\Vert
u_{j,\mathbf{n}}\right\Vert^{2}\le \\ \le
\liminf_{j\to\infty}\sum_{\mathbf{n}\in\mathbb{Z}^{k}}\left|\mathbf{n}\cdot\omega\right|^{2}\left\Vert
u_{j,\mathbf{n}}\right\Vert^{2}\le K\quad\forall N\in\mathbb{N}.\end{gather*}
 Hence, $u(\cdot)\in\mathrm{H}_{\omega}^{1}(\mathbb{T}^{k}\!\mapsto\!\mathbb{E}^{n})$
and $\left\Vert D_{\omega}u\right\Vert_{0}\le\liminf_{j\to\infty}\left\Vert D_{\omega}u_{j}\right\Vert_{0}.$
Moreover, $u_{j}(\cdot)$ converges to $u(\cdot)$ weakly in $\mathrm{H}_{\omega}^{1}(\mathbb{T}^{k}\!\mapsto\!\mathbb{E}^{n})$.

Next, for any bounded subset $\mathcal{A}\subseteq\mathcal{M}$ define
a functional space $\mathcal{H}_{\mathcal{A}}$ in a following way:
$u(\cdot)\in\mathcal{H}_{\mathcal{A}}$ iff there exists a sequence
$u_{j}(\cdot)\in\mathcal{S}_{\mathcal{A}}$ bounded in $\mathrm{H}_{\omega}^{1}(\mathbb{T}^{k}\!\mapsto\!\mathbb{E}^{n})$
and convergent to $u(\cdot)$ by norm of the space $\mathrm{H}(\mathbb{T}^{k}\!\mapsto\!\mathbb{E}^{n})$
(recall that we consider the set $\mathcal{A}\subseteq\mathcal{M}$
both as a subset of $\mathbb{E}^{n}$). As it was noted above $\mathcal{H}_{\mathcal{A}}\subset\mathrm{H}_{\omega}^{1}(\mathbb{T}^{k}\!\mapsto\!\mathbb{E}^{n})$.
We shall say that $h(\cdot)\in\mathrm{H}_{\omega}^{1}(\mathbb{T}^{k}\!\mapsto\!\mathbb{E}^{n})$
is a vector field along the map $u(\cdot)\in\mathcal{H}_{\mathcal{A}}$
defined in the above sens by a sequence $u_{j}(\cdot)$ if there exists
a sequence $h_{j}(\cdot)\in\mathrm{C}^{\infty}\left(\mathbb{T}^{k}\!\mapsto\! T\mathcal{M}\right)$
such that $h_{j}(\varphi)\in T_{u_{j}(\varphi)}\mathcal{M}$, the
sequences $\max_{\varphi\in\mathbb{T}^{k}}\left\Vert h_{j}(\varphi)\right\Vert $,
$\left\Vert h_{j}\right\Vert_{1}$ are bounded, and $\lim_{j\to\infty}\left\Vert h-h_{j}\right\Vert_{1}=0.$

\begin{defn}
A Besicovitch quasiperiodic function $u(\omega t)$ generated by a
function $u(\cdot)\in\mathcal{H}_{\mathcal{A}}$ is called a weak
quasiperiodic solution of the natural system on $\mathcal{M}$ if
it satisfies the equality \begin{equation}
\langle D_{\omega}u(\varphi),D_{\omega}h(\varphi)\rangle_{0}+\langle W_{x}^{\prime}(\varphi,u(\varphi)),h(\varphi)\rangle_{0}=0\label{eq:defgensol}\end{equation}
 for any vector field $h(\cdot)$ along $u(\cdot)$.
\end{defn}
This definition is natural since the equality \eqref{eq:defgensol}
holds true for any classical quasiperiodic solution $u(\omega t)$
and continuous vector field $h(\varphi)$ along $u(\cdot)$ with continuous
derivative $D_{\omega}h(\cdot)$. It should be also noted the following
fact.

The application of variational approach to the problem of detecting
weak quasiperiodic solution consists in finding a function $u_{\ast}(\cdot)\in\mathcal{H}_{\mathcal{A}}$
which takes values in appropriately chosen bounded subset $\mathcal{A}\subset\mathcal{M}$
and which is a strong limit in $\mathrm{H}(\mathbb{T}^{k}\mapsto\mathbb{E}^{n})$
of minimizing sequence for the functional (the averaged Lagrangian)
\begin{equation}
J[u]=\int_{\mathbb{T}^{k}}\left[\frac{1}{2}\|D_{\omega}u(\varphi)\|^{2}+W(\varphi,u(\varphi))\right]d\varphi\label{eq:mainfunctional}\end{equation}
restricted to $\mathcal{S}_{\mathcal{A}}$. It is naturally to expect
that the first variation of $J$ at $u_{\ast}(\cdot)$ vanishes, i.e.\begin{gather}
J^{\prime}[u_{\ast}]\left(h\right):=\langle D_{\omega}u_{\ast}(\varphi),D_{\omega}h(\varphi)\rangle_{0}+\langle W_{x}^{\prime}(\varphi,u_{\ast}(\varphi)),h(\varphi)\rangle_{0}=0\label{eq:varJeq0}\end{gather}
 for any vector field $h(\cdot)$ along $u_{\ast}(\cdot)$. In such
a case $u_{\ast}(\omega t)$ is a weak quasiperiodic solution.

In order to guarantee the convergence of a minimizing sequence $u_{j}(\cdot)\in\mathcal{S}_{\mathcal{A}}$
for $J\bigl|_{\mathcal{S}_{\mathcal{A}}}$ by norm $\left\Vert \cdot\right\Vert_{0}$
it is naturally to impose some convexity conditions both on the set
$\mathcal{A}$ and on the functional $J$. Usually, such conditions
are formulated by means of geodesics. But in the case where $\left(\mathcal{M},\left\langle \cdot,\cdot\right\rangle \right)$
is not a Riemannian manifold of non-positive sectional curvature,
we are not able to determine whether the functional of averaged kinetic
energy, namely $J_{1}[u]:=\frac{1}{2}\int_{\mathbb{T}^{k}}\|D_{\omega}u(\varphi)\|^{2}d\varphi$,
is convex using geodesics of Levi-Civita connection $\nabla$. if
$\left(\mathcal{M},\left\langle \cdot,\cdot\right\rangle \right)$.
(The case of Riemannian manifold of non-positive sectional curvature
was considered in \cite{ZakPar99A,ZakPar99B}.)

In order to overcome the above difficulty we introduce a conformally
equivalent inner product of the form $\left\langle \cdot,\cdot\right\rangle_{V}\bigl|_{T_{x}\mathcal{M}}:=\mathrm{e}^{V(x)}\left\langle \cdot,\cdot\right\rangle \bigl|_{T_{x}\mathcal{M}}$
with appropriately chosen smooth function $V(\cdot):\mathcal{M}\!\mapsto\!\mathbb{R}$.
With this approach we succeed in establishing a required convexity
properties of averaged Lagrangian under certain convexity conditions
imposed on functions $V(\cdot)$ and $W(\varphi,\cdot)$.

\medskip{}

\textbf{3. Convexity of averaged Lagrangian.}It is easily seen that
if $V(\cdot)\in\mathrm{C}^{\infty}(\mathcal{M}\!\mapsto\!\mathbb{R})$
is a bounded function on $\mathcal{M}$ then the Riemannian manifold
$\left(\mathcal{M},\left\langle \cdot,\cdot\right\rangle_{V}\right)$
equipped with corresponding Levi-Civita connection is complete. In
fact, by definition, the standard distance between any two points
$x_{1},x_{2}\in\left(\mathcal{M},\left\langle \cdot,\cdot\right\rangle \right)$
is defined as \begin{gather*}
\rho(x_{1},x_{2}):=\inf\left\{ l(c):c\in\mathcal{C}_{x_{1},x_{2}}\right\},\end{gather*}
 where $\mathcal{C}_{x_{1},x_{2}}$ is the set of all piecewise differentiable
paths $c:[0,1]\mapsto\mathcal{M}$ connecting $x_{1}$ with $x_{2}$,
and $l(c)$ is the length of $c$ on $\left(\mathcal{M},\left\langle \cdot,\cdot\right\rangle \right)$.
If we denote by $l_{V}(c)$ the length of path $c$ on $\left(\mathcal{M},\left\langle \cdot,\cdot\right\rangle_{V}\right)$,
then \begin{gather*}
\inf_{x\in\mathcal{M}}\sqrt{\mathrm{e}^{V(x)}}l(c)\le l_{V}(c)\le\sup_{x\in\mathcal{M}}\sqrt{\mathrm{e}^{V(x)}}l(c).\end{gather*}
 Hence, the metric $\rho(\cdot,\cdot)$ and the metric $\rho_{V}(\cdot,\cdot)$
of $\left(\mathcal{M},\left\langle \cdot,\cdot\right\rangle_{V}\right)$
are equivalent. Now it remains only to apply the Hopf–Rinow theorem
(see, e.g., \cite[sect. 5.3]{GKM71}).

In order to distinguish geodesics of metrics $\rho$ and $\rho_{V}$
we shall call them $\rho$-geodesic and $\rho_{V}$-geodesic respectively.

For $x\in\mathcal{M}$, let $\exp_{x}(\cdot):T_{x}\mathcal{M}$$\mapsto\mathcal{M}$
denotes the exponential mapping of Riemannian manifold $\left(\mathcal{M},\left\langle \cdot,\cdot\right\rangle \right)$
with Levi-Civita connection $\nabla$ and let $\exp_{x}^{V}(\cdot):T_{x}\mathcal{M}\!\mapsto\!\mathcal{M}$
be the analogous mapping of Riemannian manifold $\left(\mathcal{M},\left\langle \cdot,\cdot\right\rangle_{V}\right)$
with corresponding Levi-Civita connection $\nabla^{V}$. Note that
since both manifolds are complete the domains of both exponential
mappings coincide with entire $T_{x}\mathcal{M}$.

Recall that a set of a Riemannian manifold is called convex if together
with any two points $x_{1},x_{2}$ this set contains a (unique) minimal
geodesic segment connecting $x_{1}$ with $x_{2}$(see, e.g., \cite[sect. 11.8]{BisCrit64}
or \cite[sect. 5.2]{GKM71}). It is well known that for any point
$x_{0}$ an open ball of sufficiently small radius centered at point
$x_{0}$ is convex. A function $f:\mathcal{D}_{f}\!\mapsto\!\mathbb{R}$
with convex domain $\mathcal{D}_{f}\subset\mathcal{M}$ is convex
iff its superposition with any naturally parametrized geodesic containing
in $\mathcal{D}_{f}$ is convex.

Recall also that for the function $V(\cdot)$, the Hesse form $H_{V}(x)$
at point $x$ (see., e.g., \cite{GKM71}) is defined by the equality
\begin{gather*}
\left\langle H_{V}(x)\xi,\eta\right\rangle :=\left\langle \nabla_{\xi}\nabla V(x),\eta\right\rangle
\quad\forall\xi,\eta\in T_{x}\mathcal{M}.\end{gather*} In addition, let us introduce the
following quadratic form \begin{gather*} \left\langle G_{V}(x)\xi,\xi\right\rangle :=\left\langle
H_{V}(x)\xi,\xi\right\rangle -\frac{1}{2}\left\langle \nabla V(x),\xi\right\rangle^{2}\quad\forall\xi\in
T_{x}\mathcal{M},\end{gather*} and denote \begin{gather*} \lambda_{V}(x):=\min_{\xi\in
T_{x}\mathcal{M}\setminus\{0\}}\left\langle H_{V}(x)\xi,\xi\right\rangle /\left\Vert \xi\right\Vert^{2},\\
\mu_{V}(x):=\min_{\xi\in T_{x}\mathcal{M}\setminus\{0\}}\left\langle G_{V}(x)\xi,\xi\right\rangle /\left\Vert
\xi\right\Vert^{2}.\end{gather*}

We accept the following hypotheses concerning convexity properties
of functions $V(\cdot)$ and $W(\cdot)$:

\begin{description}
\item [{{(H1)}}] there exist a bounded function $V(\cdot)\in\mathrm{C}^{\infty}(\mathcal{M}\!\mapsto\!\mathbb{R})$
and a bounded domain $\mathcal{D}\subset\mathcal{M}$ such that\begin{gather}
\lambda_{V}(x)+\frac{1}{2}\left\Vert \nabla V(x)\right\Vert^{2}\ge0,\quad\forall x\in\mathcal{D};\label{eq:lambdV}\end{gather}

\item [{{(H2)}}] there exist a noncritical value $v\in V(\mathcal{D})$
and a connected component $\Omega$ of open sublevel set $V^{-1}((-\infty,v))$
with the following properties: (a) for any $x,y\in\Omega$ the domain
$\mathcal{D}$ contains a unique minimal $\rho_{V}$-geodesic segment
with endpoints $x,y$; (b) the second fundamental form of $\partial\Omega$
is positive at each point $x\in\partial\Omega$ (i.e. for any $x\in\partial\Omega$
the restriction of $H_{V}(x)$ to $T_{x}\partial\Omega$ is positive
definite); (c) the function $V(\cdot)$ satisfies the inequality \begin{gather}
\mu_{V}(x)\ge2K^{\ast}(x)\quad\forall x\in\Omega\label{eq:mugeK}\end{gather}
 where \begin{gather*}
K^{\ast}(x):=\max_{\sigma_{x}(\xi,\eta)}\frac{\left\langle R(\eta,\xi)\xi,\eta\right\rangle}{\left\Vert \eta\right\Vert^{2}\left\Vert \xi\right\Vert^{2}-\left\langle \eta,\xi\right\rangle^{2}}\end{gather*}
 is the maximum sectional curvature at point $x$, $R$ is the Riemann
curvature tensor of $\left(\mathcal{M},\left\langle \cdot,\cdot\right\rangle \right)$,
$\sigma_{x}(\xi,\eta)$ is a plane defined by vectors $\xi,\eta\in T_{x}\mathcal{M}$,
and $K(\sigma_{x}(\xi,\eta))$ is a sectional curvature in direction
$\sigma_{x}(\xi,\eta)$ \cite{GKM71};
\item [{{(H3)}}] the function $W(\cdot,\cdot)$ satisfies the following
inequalities \begin{gather*}
\lambda_{W}(\varphi,x)+\frac{1}{2}\left\langle \nabla W(\varphi,x),\nabla V(x)\right\rangle >0\quad\forall(\varphi,x)\in\mathbb{T}^{k}\times\bar{\Omega}\quad(\bar{\Omega}:=\Omega\cup\partial\Omega),\\
\left\langle \nabla W(\varphi,x),\nabla V(x)\right\rangle >0\quad\forall(\varphi,x)\in\mathbb{T}^{k}\times\partial\Omega\end{gather*}
where $\lambda_{W}(\varphi,x)$ is minimal eigenvalue of Hesse form
$H_{W}(\varphi,x)$ for the function $W(\varphi,\cdot):\mathcal{M}\!\mapsto\!\mathbb{R}$.
\end{description}
\begin{thm}
\label{thm:Jprop}Let the Hypotheses (H1)--(H3) hold true. Then there exist
positive constants $C$, $C_{1}$ and $c$ such that for any
$u_{0}(\cdot),u_{1}(\cdot)\in\mathrm{C}^{\infty}\left(\mathbb{T}^{k}\!\mapsto\!\Omega\right)$ one can choose a
vector field $h(\cdot)\in\mathrm{C}^{\infty}\left(\mathbb{T}^{k}\!\mapsto\! T\mathcal{M}\right)$ along
$u_{0}(\cdot)$ (this implies that $h(\varphi)\in T_{u_{0}(\varphi)}\mathcal{M}$ for all
$\varphi\in\mathbb{T}^{k}$) in such a way that the following inequalities hold true
\begin{gather*} c\rho(u_{0}(\varphi),u_{1}(\varphi))\le\left\Vert {h(\varphi)}\right\Vert
\le{C\rho(u_{0}(\varphi),u_{1}(\varphi))}\quad\forall\varphi\in{\mathbb{T}^{k}},\\ \left\Vert
D_{\omega}h(\varphi)\right\Vert \le C_{1}\left[\left\Vert D_{\omega}u_{0}(\varphi)\right\Vert +\left\Vert
D_{\omega}u_{1}(\varphi)\right\Vert \right]\quad\forall\varphi\in\mathbb{T}^{k},\\
J[u_{1}]-J[u_{0}]-J^{\prime}[u_{0}](h)\ge\frac{\varkappa
c^{2}}{2}\intop_{\mathbb{T}^{k}}\rho^{2}(u_{0},u_{1})\mathrm{d}\varphi\end{gather*}
 where $\varkappa:=\min\left\{ \lambda_{W}(\varphi,x)+\frac{1}{2}\left\langle \nabla W(\varphi,x),\nabla V(x)\right\rangle
 :(\varphi,x)\in\mathbb{T}^{k}\times\bar{\Omega}\right\}.$
\end{thm}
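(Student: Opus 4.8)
The plan is to build $h$ from the conformal geometry and to reduce the third (convexity) inequality to a one–dimensional Taylor estimate whose remainder is the second variation of $J$ along a $\rho_V$–geodesic homotopy joining $u_0$ to $u_1$. First I would fix, for every $\varphi\in\mathbb{T}^k$, the unique minimal $\rho_V$–geodesic $s\mapsto u_s(\varphi)$, $s\in[0,1]$, from $u_0(\varphi)$ to $u_1(\varphi)$: it exists, is unique and lies in $\mathcal{D}$ by (H2)(a), and the positivity of the second fundamental form of $\partial\Omega$ in (H2)(b) makes $\Omega$ a $\rho_V$–convex set, so the geodesic stays in $\Omega$. Put $h(\varphi):=\partial_s u_s(\varphi)|_{s=0}=(\exp_{u_0(\varphi)}^{V})^{-1}(u_1(\varphi))$. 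On the compact convex region the $V$–exponential is a diffeomorphism depending smoothly on base point and argument, so $h(\cdot)\in\mathrm{C}^{\infty}(\mathbb{T}^k\!\mapsto\! T\mathcal{M})$ with $h(\varphi)\in T_{u_0(\varphi)}\mathcal{M}$.

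The two pointwise norm estimates are comparatively routine. Since $\|h(\varphi)\|_V$ is the $\rho_V$–length of the geodesic, $\|h(\varphi)\|_V=\rho_V(u_0(\varphi),u_1(\varphi))$, while $\|\cdot\|_V^2=\mathrm{e}^{V}\|\cdot\|^2$ with $V$ bounded; combined with the length equivalence $\inf_{\mathcal{M}}\sqrt{\mathrm{e}^{V}}\,\rho\le\rho_V\le\sup_{\mathcal{M}}\sqrt{\mathrm{e}^{V}}\,\rho$ recorded above, this yields $c\rho\le\|h\|\le C\rho$ with explicit constants. For the derivative estimate I differentiate the homotopy in the base variable: each field $s\mapsto\partial_{\varphi_i}u_s(\varphi)$ is a $\rho_V$–Jacobi field along the geodesic with endpoint values $\partial_{\varphi_i}u_0,\partial_{\varphi_i}u_1$, and on the compact set $\bar\Omega$ such Jacobi fields and their covariant $s$–derivatives are bounded by their endpoint data; evaluating at $s=0$, contracting with $\omega$, and converting $\nabla^{V}$ into $\nabla$ (the conformal correction involves only the bounded field $\nabla V$) gives $\|D_\omega h\|\le C_1(\|D_\omega u_0\|+\|D_\omega u_1\|)$.

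For the convexity inequality set $g(s):=J[u_s]$. A direct first–variation computation identifies $g'(0)$ with $J'[u_0](h)$, so Taylor's formula with integral remainder gives
\[
J[u_1]-J[u_0]-J'[u_0](h)=\int_0^1(1-s)\,g''(s)\,\mathrm{d}s,
\]
and since $\int_0^1(1-s)\,\mathrm{d}s=\tfrac12$ it suffices to prove $g''(s)\ge\varkappa c^2\int_{\mathbb{T}^k}\rho^2(u_0,u_1)\,\mathrm{d}\varphi$ uniformly in $s$. Writing $h_s:=\partial_s u_s$ and decomposing $g''=\int_{\mathbb{T}^k}(e''+w'')\,\mathrm{d}\varphi$ into kinetic and potential parts, both are computed from the conformal acceleration identity $\nabla_{h_s}h_s=-\langle\nabla V,h_s\rangle h_s+\tfrac12\|h_s\|^2\nabla V$ valid along a $\rho_V$–geodesic. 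The potential part is $w''=\langle H_W h_s,h_s\rangle-\langle\nabla V,h_s\rangle\langle\nabla W,h_s\rangle+\tfrac12\|h_s\|^2\langle\nabla W,\nabla V\rangle$, while the curvature identity together with the symmetry $\nabla_{\partial_s}D_\omega u_s=D_\omega h_s$ turns the kinetic part into $e''=\|D_\omega h_s\|^2+\langle R(h_s,D_\omega u_s)h_s,D_\omega u_s\rangle+\langle D_\omega(\nabla_{h_s}h_s),D_\omega u_s\rangle$, whose last summand, upon inserting the acceleration and $\nabla_{D_\omega u_s}\nabla V=H_V(D_\omega u_s)$, generates exactly the combination $\langle H_V\,\cdot\,,\cdot\rangle-\tfrac12\langle\nabla V,\cdot\rangle^2$ defining $G_V$.

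The heart of the proof — and the step I expect to be hardest — is the uniform lower bound $e''+w''\ge\varkappa\|h_s\|^2$. The sectional–curvature term is estimated from below by $-K^{\ast}(u_s)\bigl(\|h_s\|^2\|D_\omega u_s\|^2-\langle h_s,D_\omega u_s\rangle^2\bigr)$; after completing the square in $D_\omega h_s$ it must be dominated by the $G_V$–terms, which is precisely where (H2)(c) $\mu_V\ge 2K^{\ast}$ and (H1) $\lambda_V+\tfrac12\|\nabla V\|^2\ge0$ enter to force the entire kinetic contribution to be nonnegative. The remaining, genuinely delicate, bookkeeping is to absorb the sign–indefinite cross terms $\langle\nabla V,h_s\rangle\langle\nabla W,h_s\rangle$ together with their kinetic counterparts produced by the conformal acceleration, so that what survives from the potential is exactly $\langle H_W h_s,h_s\rangle+\tfrac12\|h_s\|^2\langle\nabla W,\nabla V\rangle\ge\varkappa\|h_s\|^2$ by (H3). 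This balancing of the conformal cross terms against the curvature estimate and the Hessian of $W$ is what the quadratic form $G_V$ and Hypotheses (H1)--(H3) are engineered to control; I regard it as the main obstacle. Once it is in hand, I conclude by noting $\|h_s\|^2=\mathrm{e}^{-V(u_s)}\rho_V^2(u_0,u_1)\ge c^2\rho^2(u_0,u_1)$ with the same $c$ as above, so integrating the pointwise bound over $\mathbb{T}^k$ gives $g''(s)\ge\varkappa c^2\int_{\mathbb{T}^k}\rho^2(u_0,u_1)\,\mathrm{d}\varphi$ and hence the claim.
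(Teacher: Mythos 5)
Your overall architecture (a geodesic homotopy from $u_0$ to $u_1$, Taylor expansion of $s\mapsto J[u_s]$, convexity of the kinetic part via the curvature comparison encoded in $G_V$ and $\mu_V\ge 2K^{\ast}$, and the potential part handled by (H3)) matches the paper's, but there is a genuine gap exactly at the point you yourself flag as ``the main obstacle'': the sign-indefinite cross term $-\left\langle \nabla V,h_s\right\rangle\left\langle \nabla W,h_s\right\rangle$ arising from the $\rho_V$-geodesic acceleration $\nabla_{h_s}h_s=-\left\langle \nabla V,h_s\right\rangle h_s+\tfrac12\left\Vert h_s\right\Vert^{2}\nabla V$. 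Your hope that this term is cancelled by ``kinetic counterparts'' cannot work: the second variation of the kinetic energy contains no occurrence of $\nabla W$ whatsoever, so there is nothing in $e''$ to absorb a term proportional to $\left\langle \nabla W,h_s\right\rangle$, and none of (H1)--(H3) bounds this product from below. With the affine $\rho_V$-parametrization you chose, the estimate $w''\ge\varkappa\left\Vert h_s\right\Vert^{2}$ is simply false in general.

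The missing idea is the paper's \emph{connecting mapping}: one reparametrizes the $\rho_V$-geodesic by $t=\tau(s,x,y)$, where $\tau$ solves $\tau'=\exp(V\circ\gamma_V(\tau))\int_0^1\exp(-V\circ\gamma_V(t))\,\mathrm{d}t$. Since $\tau''/\tau'=(V\circ\chi)'$, this change of parameter exactly cancels the $-\left\langle \nabla V,\dot x\right\rangle\dot x$ term, leaving $\nabla_{\chi_s'}\chi_s'=\tfrac12\left\Vert \chi_s'\right\Vert^{2}\nabla V$ (Proposition~\ref{pro:chiprop}). Along this reparametrized curve the potential's second derivative is precisely $\left\langle H_W\chi_s',\chi_s'\right\rangle+\tfrac12\left\Vert \chi_s'\right\Vert^{2}\left\langle \nabla W,\nabla V\right\rangle\ge\varkappa\left\Vert \chi_s'\right\Vert^{2}$, with no cross term, and the kinetic computation in Proposition~\ref{pro:convJ1} also becomes a clean quadratic in $r=\left\Vert \nabla_{\xi}\eta\right\Vert/(\left\Vert \xi\right\Vert\left\Vert \eta\right\Vert)$ whose nonnegativity is exactly the discriminant condition $\mu_V\ge 2K^{\ast}$. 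A further benefit is that $\left\Vert \chi_s'\right\Vert$ is explicitly comparable to $\rho_V(u_0,u_1)$ uniformly in $s$, which is what delivers the constants $c,C$. Two smaller inaccuracies: the fact that the geodesic segment stays in $\Omega$ is not a consequence of (H2)(b) but of (H2)(a) combined with Proposition~\ref{pro:rogeodOm}, which uses (H1) to show $\mathrm{e}^{V\circ x(t)}$ is convex along the segment; and your $h=(\exp^V_{u_0})^{-1}(u_1)=\zeta(u_0,u_1)$ differs from the paper's $h=\chi_s'(0,u_0,u_1)=\tau'(0)\zeta(u_0,u_1)$ by a positive scalar factor, which is harmless for the norm bounds but means your first-variation identity must be restated for the reparametrized curve once you adopt it.
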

The proof of this theorem needs several auxiliary statements and will
be given below at the end of present Section.

\begin{prop}
\label{pro:Vgeoeq}The Euler-Lagrange equation for $\rho_{V}$-geodesic
on Riemannian manifold $\left(\mathcal{M},\left\langle \cdot,\cdot\right\rangle \right)$
has the form\begin{gather}
\nabla_{\dot{x}}\dot{x}=-\left\langle \nabla V(x),\dot{x}\right\rangle \dot{x}+\frac{\left\Vert \dot{x}\right\Vert^{2}}{2}\nabla V(x),\label{eq:Vgeoequ}\end{gather}

\end{prop}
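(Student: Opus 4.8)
The plan is to derive the Euler--Lagrange equation for the length (or energy) functional of the conformal metric $\langle\cdot,\cdot\rangle_V$ directly in terms of the Levi-Civita connection $\nabla$ of the original metric $\langle\cdot,\cdot\rangle$, rather than working with the connection $\nabla^V$. The cleanest route is to compute the $\rho_V$-geodesic equation as the critical-point condition of the energy functional
\[
E_V(x)=\frac12\int_0^1\langle\dot x,\dot x\rangle_V\,dt=\frac12\int_0^1\mathrm{e}^{V(x)}\langle\dot x,\dot x\rangle\,dt,
\]
since for a naturally parametrized geodesic the length and energy functionals have the same extremals. I would take a variation $x_s(t)$ with variation field $\xi(t)=\partial_s x_s|_{s=0}$ vanishing at the endpoints, and differentiate $E_V(x_s)$ in $s$ at $s=0$.

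First I would expand $\frac{d}{ds}E_V(x_s)\big|_{s=0}$. Two terms appear: one from differentiating the conformal factor, giving $\frac12\int_0^1\mathrm{e}^{V}\,\mathrm{d}V(\xi)\,\langle\dot x,\dot x\rangle\,dt=\frac12\int_0^1\mathrm{e}^{V}\langle\nabla V,\xi\rangle\langle\dot x,\dot x\rangle\,dt$ using the defining property $\langle\nabla V,\xi\rangle=\mathrm{d}V(\xi)$; and one from differentiating the kinetic part, giving $\int_0^1\mathrm{e}^{V}\langle\nabla_{\dot x}\dot x{+}(\mathrm{d}V(\dot x))\dot x,\xi\rangle$-type contributions after integration by parts. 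The standard symmetry $\nabla_s\dot x=\nabla_t\xi$ (torsion-freeness of $\nabla$) and metric compatibility $\partial_t\langle\dot x,\xi\rangle=\langle\nabla_t\dot x,\xi\rangle+\langle\dot x,\nabla_t\xi\rangle$ are the tools for this. The key bookkeeping step is to move the factor $\mathrm{e}^{V}$ outside the $t$-derivative during integration by parts, which produces the extra term $-\int_0^1\mathrm{e}^V\langle\nabla V,\dot x\rangle\langle\dot x,\xi\rangle\,dt$. Collecting everything, the coefficient of $\xi$ under the integral must vanish for all admissible $\xi$, yielding
\[
\mathrm{e}^{V}\Bigl[\nabla_{\dot x}\dot x+\langle\nabla V,\dot x\rangle\dot x-\tfrac12\langle\dot x,\dot x\rangle\nabla V\Bigr]=0,
\]
and dividing by $\mathrm{e}^{V}>0$ gives exactly \eqref{eq:Vgeoequ}.

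The main obstacle, or rather the point requiring care, is sign bookkeeping during integration by parts of the term $\int_0^1\mathrm{e}^{V}\langle\nabla_t\dot x,\xi\rangle\,dt$: one must integrate by parts on $\langle\dot x,\nabla_t\xi\rangle$ after invoking $\nabla_s\dot x=\nabla_t\xi$, and track that $\partial_t(\mathrm{e}^V)=\mathrm{e}^V\langle\nabla V,\dot x\rangle$ contributes the $\langle\nabla V,\dot x\rangle\dot x$ term with the correct sign, while the conformal-factor variation supplies $+\frac12\langle\dot x,\dot x\rangle\nabla V$. An alternative, purely connection-theoretic verification is to recall the general formula relating the Levi-Civita connection $\nabla^V$ of a conformally scaled metric $\mathrm{e}^{2\psi}\langle\cdot,\cdot\rangle$ to $\nabla$, namely $\nabla^V_X Y=\nabla_X Y+\mathrm{d}\psi(X)Y+\mathrm{d}\psi(Y)X-\langle X,Y\rangle\nabla\psi$; substituting $\psi=V/2$ and imposing $\nabla^V_{\dot x}\dot x=0$ reproduces \eqref{eq:Vgeoequ}, providing an independent check that the variational computation has been carried out correctly.
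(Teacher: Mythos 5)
Your proposal is correct and follows essentially the same route as the paper: both derive the equation as the first-variation condition for the conformal energy functional $\int_0^1 \mathrm{e}^{V\circ x}\left\Vert \dot{x}\right\Vert^{2}\mathrm{d}t$, using torsion-freeness ($\nabla_{y^{\prime}}\dot{y}=\nabla_{\dot{y}}y^{\prime}$) and integration by parts with the factor $\mathrm{e}^{V}$ kept inside the $t$-derivative. Your supplementary check via the conformal transformation law for the Levi-Civita connection with $\psi=V/2$ is a valid independent confirmation, though the paper does not include it.
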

\begin{proof}
A $\rho_{V}$-geodesic segment with endpoints $x_{0},x_{1}\in\mathcal{M}$
is an extremal of functional $\Phi[x(\cdot)]=\intop_{0}^{1}\mathrm{e}^{V\circ x(t)}\left\Vert \dot{x}(t)\right\Vert^{2}\mathrm{d}t$
defined on the space $\mathcal{C}_{x_{0}x_{1}}^{2}$of twice continuous
differentiable curves $x=x(t),$ $t\in[0,1]$, such that $x(0)=x_{0}$,
$x(1)=x_{1}$. We are going to derive the Euler-Lagrange equation
using the connection $\nabla$. Consider a variation of $x(\cdot)$
defined by a smooth mapping $y(\cdot,\cdot):[0,1]\times(-\varepsilon,\varepsilon)\mapsto\mathcal{M}$
such that $y(\cdot,\lambda)\in\mathcal{C}_{x_{0}x_{1}}^{\infty}$for
any fixed $\lambda\in(-\varepsilon,\varepsilon)$ and $y(t,0)\equiv x(t)$.
Put \[
\dot{y}(t,\lambda):=\frac{\partial}{\partial t}y(t,\lambda),\quad y^{\prime}(t,\lambda):=\frac{\partial}{\partial\lambda}y(t,\lambda).\]
 Obviously, $\dot{y}(t,0)=\dot{x}(t)$, $y(i,\lambda)\equiv x_{i}$,
and $y^{\prime}(i,\lambda)=0$, $i=0,1$. Then since
$\nabla_{y^{\prime}}\dot{y}=\nabla_{\dot{y}}y^{\prime}$, we have \begin{gather*}
\frac{\mathrm{d}}{\mathrm{d}\lambda}\bigl|_{\lambda=0}\intop_{0}^{1}\mathrm{e}^{V\circ y}\left\Vert
\dot{y}\right\Vert^{2}\mathrm{d}s=\\ =\intop_{0}^{1}\left[\mathrm{e}^{V\circ y}\left\langle \nabla V\circ
y,y^{\prime}\right\rangle \left\Vert \dot{y}\right\Vert^{2}+2\mathrm{e}^{V\circ y}\left\langle
\nabla_{y^{\prime}}\dot{y},\dot{y}\right\rangle \right]_{\lambda=0}\mathrm{d}t=\\
=\intop_{0}^{1}\left[\mathrm{e}^{V\circ y}\left\langle \nabla V\circ y,y^{\prime}\right\rangle \left\Vert
\dot{y}\right\Vert^{2}+2\mathrm{e}^{V\circ y}\left\langle \nabla_{\dot{y}}y^{\prime},\dot{y}\right\rangle
\right]_{\lambda=0}\mathrm{d}t.\end{gather*}
 Taking into account that \begin{gather*}
\frac{\partial}{\partial t}\mathrm{e}^{V\circ y}\left\langle y^{\prime},\dot{y}\right\rangle =\mathrm{e}^{V\circ y}\left\langle \nabla V\circ y,\dot{y}\right\rangle \left\langle y^{\prime},\dot{y}\right\rangle +\mathrm{e}^{V\circ y}\left\langle \nabla_{\dot{y}}y^{\prime},\dot{y}\right\rangle +\mathrm{e}^{V\circ y}\left\langle y^{\prime},\nabla_{\dot{y}}\dot{y}\right\rangle \end{gather*}
 and $\mathrm{e}^{V\circ y}\left\langle y^{\prime},\dot{y}\right\rangle \bigl|_{t=0,1}=0$,
we get \begin{gather*}
\intop_{0}^{1}\mathrm{e}^{V\circ y}\left\langle \nabla_{\dot{y}}y^{\prime},\dot{y}\right\rangle \mathrm{d}t=-\intop_{0}^{1}\mathrm{e}^{V\circ y}\left[\left\langle \nabla V\circ y,\dot{y}\right\rangle \left\langle y^{\prime},\dot{y}\right\rangle +\left\langle y^{\prime},\nabla_{\dot{y}}\dot{y}\right\rangle \right]\mathrm{d}t.\end{gather*}
 From this it follows that the first variation on functional $\Phi$
is \begin{gather*}
\frac{\mathrm{d}}{\mathrm{d}\lambda}\Bigl|_{\lambda=0}\Phi[y(\cdot,\lambda)]=\Phi^{\prime}[x(\cdot)]\left(y^{\prime}(\cdot,0)\right)=\\
=\intop_{0}^{1}\left[\mathrm{e}^{V}\left(\left\langle \nabla V,y^{\prime}\right\rangle \left\Vert \dot{x}\right\Vert^{2}-2\left\langle \nabla V,\dot{x}\right\rangle \left\langle \dot{x},y^{\prime}\right\rangle -2\left\langle \nabla_{\dot{x}}\dot{x},y^{\prime}\right\rangle \right)\right]\bigl|_{x=x(t),\lambda=0}\mathrm{d}t,\end{gather*}
 and the Euler-Lagrange equation is exactly \eqref{eq:Vgeoequ}.
\end{proof}
\begin{prop}
\label{pro:rogeodOm}Let the Hypothesis (H1) holds true. If a $\rho_{V}$-geodesic
segment connecting points $x_{0},x_{1}$ of the set $\Omega$ belongs
to $\mathcal{D}$, then this segment belongs to $\Omega$.
\end{prop}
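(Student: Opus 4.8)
The plan is to exhibit a convex barrier function along the geodesic and then combine a one-dimensional maximum principle with a connectedness argument. I would first parametrize the given minimal $\rho_V$-geodesic segment as $x(t)$, $t\in[0,1]$, with $x(0)=x_0$, $x(1)=x_1$, taken as the extremal of the functional $\Phi$ of Proposition \ref{pro:Vgeoeq} on $[0,1]$, so that $x(\cdot)$ satisfies the Euler--Lagrange equation \eqref{eq:Vgeoequ}. By hypothesis the whole segment lies in $\mathcal{D}$, hence the inequality \eqref{eq:lambdV} of (H1) is available at every point $x(t)$.

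The key computation is the second derivative of $V$ along $x(\cdot)$. Differentiating $\frac{\mathrm{d}}{\mathrm{d}t}V(x(t))=\langle\nabla V(x),\dot{x}\rangle$ via metric compatibility and the definition of the Hesse form gives
\[
\frac{\mathrm{d}^2}{\mathrm{d}t^2}V(x(t))=\langle H_V(x)\dot{x},\dot{x}\rangle+\langle\nabla V(x),\nabla_{\dot{x}}\dot{x}\rangle,
\]
and substituting \eqref{eq:Vgeoequ} for $\nabla_{\dot{x}}\dot{x}$ yields
\[
\frac{\mathrm{d}^2}{\mathrm{d}t^2}V(x(t))=\langle H_V(x)\dot{x},\dot{x}\rangle-\langle\nabla V(x),\dot{x}\rangle^2+\tfrac12\|\dot{x}\|^2\|\nabla V(x)\|^2.
\]
The decisive observation is that the indefinite term $-\langle\nabla V,\dot{x}\rangle^2$ is cancelled if one differentiates $\mathrm{e}^{V(x(t))}$ in place of $V(x(t))$: a direct computation gives
\[
\frac{\mathrm{d}^2}{\mathrm{d}t^2}\mathrm{e}^{V(x(t))}=\mathrm{e}^{V(x)}\Bigl[\langle H_V(x)\dot{x},\dot{x}\rangle+\tfrac12\|\dot{x}\|^2\|\nabla V(x)\|^2\Bigr].
\]
By the definition of $\lambda_V$ and Hypothesis (H1) the bracket is at least $\bigl(\lambda_V(x)+\tfrac12\|\nabla V(x)\|^2\bigr)\|\dot{x}\|^2\ge0$, so $\mathrm{e}^{V(x(t))}$ is convex on $[0,1]$.

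The convexity then closes the argument: $\mathrm{e}^{V(x(t))}$ attains its maximum on $[0,1]$ at an endpoint, and since $x_0,x_1\in\Omega\subseteq V^{-1}((-\infty,v))$ we have $\mathrm{e}^{V(x(t))}\le\max\{\mathrm{e}^{V(x_0)},\mathrm{e}^{V(x_1)}\}<\mathrm{e}^{v}$, i.e. $V(x(t))<v$ for all $t$, so the whole segment lies in $V^{-1}((-\infty,v))$. Finally, the segment is a connected subset of $V^{-1}((-\infty,v))$ meeting the component $\Omega$ at $x_0$; since connected components are maximal connected sets, the segment is contained in $\Omega$. The substantive point is precisely the choice of the exponential barrier $\mathrm{e}^{V}$, whose factor is tuned to annihilate the $\langle\nabla V,\dot{x}\rangle^2$ contribution coming from the conformal geodesic equation \eqref{eq:Vgeoequ} (note that $V$ itself need not be convex along $\rho_V$-geodesics); this is also why (H1) is stated in the form \eqref{eq:lambdV}. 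Once convexity of $\mathrm{e}^{V}$ is in hand I expect no further obstacle, the maximum-principle and connectedness steps being routine.
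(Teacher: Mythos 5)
Your proof is correct and follows essentially the same route as the paper: the central computation showing that the conformal factor $\mathrm{e}^{V}$ is convex along a $\rho_{V}$-geodesic contained in $\mathcal{D}$, via the cancellation of the $\left\langle \nabla V,\dot{x}\right\rangle^{2}$ term and Hypothesis (H1), is exactly the paper's argument. Your closing connectedness step (passing from the sublevel set $V^{-1}((-\infty,v))$ to its component $\Omega$) is a welcome tightening of a point the paper leaves implicit, but it is not a different approach.
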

\begin{proof}
Let $x(\cdot)\in\mathcal{C}_{x_{0}x_{1}}^{2}$ satisfies \eqref{eq:Vgeoequ} and let
$x(t)\in\mathcal{D}$ for all $t\in[0,1]$. Then \begin{gather*}
\frac{\mathrm{d}^{2}}{\mathrm{d}t^{2}}\mathrm{e}^{V}\Bigl|_{x=x(t)}=\\ =\left[\mathrm{e}^{V}\left(\left\langle
\nabla_{\dot{x}}\nabla V,\dot{x}\right\rangle +\left\langle \nabla V,-\left\langle \nabla V,\dot{x}\right\rangle
\dot{x}+\left\Vert \dot{x}\right\Vert^{2}\nabla V/2\right\rangle +\left\langle \nabla
V,\dot{x}\right\rangle^{2}\right)\right]\bigl|_{x=x(t)}=\\ =\left[\mathrm{e}^{V}\left(\left\langle
\nabla_{\dot{x}}\nabla V,\dot{x}\right\rangle +\left\Vert \dot{x}\right\Vert^{2}\left\Vert \nabla
V\right\Vert^{2}/2\right)\right]\bigl|_{x=x(t)}\ge\\ \ge \left[\mathrm{e}^{V}\left\Vert
\dot{x}\right\Vert^{2}\left(\lambda_{V}+\left\Vert \nabla
V\right\Vert^{2}/2\right)\right]\bigl|_{x=x(t)}\ge0.\end{gather*}
 Hence, $\mathrm{e}^{V\circ x(\cdot)}$ is convex and this implies
$V\circ x(t)<v$ for all $t\in[0,1]$.
\end{proof}
\begin{prop}
\label{pro:uniqmingeo}Under the Hypotheses (H1)-(H2), the minimal
$\rho_{V}$-geodesic segment connecting any two points $x,y\in\Omega$
does not contain conjugate points.
\end{prop}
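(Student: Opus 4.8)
The plan is to read \emph{conjugate points} in the sense of the connection $\nabla^{V}$ of $\left(\mathcal{M},\left\langle\cdot,\cdot\right\rangle_{V}\right)$, for which the segment in question is a genuine geodesic, and to deduce their absence from non-positivity of the sectional curvature of $\left\langle\cdot,\cdot\right\rangle_{V}$ along that segment. First I would localize the segment. By Hypothesis (H2)(a) the unique minimal $\rho_{V}$-geodesic $\gamma:[0,1]\mapsto\mathcal{M}$ with $\gamma(0)=x$, $\gamma(1)=y$ lies in $\mathcal{D}$, so Proposition \ref{pro:rogeodOm} applies and places the whole of $\gamma$ inside $\Omega$. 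Thus it suffices to control the geometry of $\left\langle\cdot,\cdot\right\rangle_{V}$ on $\Omega$, where Hypothesis (H2)(c) is available.

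The key step is to express the sectional curvature $K_{V}$ of $\left\langle\cdot,\cdot\right\rangle_{V}=\mathrm{e}^{V}\left\langle\cdot,\cdot\right\rangle$ through the data of $\left(\mathcal{M},\left\langle\cdot,\cdot\right\rangle\right)$ by the conformal change-of-curvature formula. Writing $\mathrm{e}^{V}=\mathrm{e}^{2\phi}$ with $\phi=V/2$ and evaluating on a $\left\langle\cdot,\cdot\right\rangle$-orthonormal pair $\xi,\eta$ spanning $\sigma_{x}(\xi,\eta)$, one obtains an identity of the shape
\begin{gather*}
\mathrm{e}^{V(x)}K_{V}(\sigma_{x}(\xi,\eta))=K(\sigma_{x}(\xi,\eta))-\tfrac12\left\langle G_{V}(x)\xi,\xi\right\rangle-\tfrac12\left\langle G_{V}(x)\eta,\eta\right\rangle-\tfrac14\left\Vert\nabla V(x)\right\Vert^{2},
\end{gather*}
in which the Hessian and gradient contributions of the conformal formula recombine exactly into the quadratic form $G_{V}=H_{V}-\tfrac12\left\langle\nabla V,\cdot\right\rangle^{2}$ introduced before the hypotheses. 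Bounding each diagonal term below by $\mu_{V}(x)$ and the geometric term above by the maximal sectional curvature $K^{\ast}(x)$ yields
\begin{gather*}
\mathrm{e}^{V(x)}K_{V}(\sigma_{x}(\xi,\eta))\le K^{\ast}(x)-\mu_{V}(x)-\tfrac14\left\Vert\nabla V(x)\right\Vert^{2}\le0,
\end{gather*}
the last inequality being precisely Hypothesis (H2)(c); here the factor $2$ in \eqref{eq:mugeK} is what matches the two orthonormal directions occurring in the curvature formula. Hence $\left\langle\cdot,\cdot\right\rangle_{V}$ has non-positive sectional curvature throughout $\Omega$.

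Finally I would rule out conjugate points along $\gamma$ by the $\nabla^{V}$-index form. Suppose $J$ were a nontrivial $\nabla^{V}$-Jacobi field along $\gamma$ with $J(t_{0})=J(t_{1})=0$ for some $0\le t_{0}<t_{1}\le1$. Since a Jacobi field vanishing at both ends is a null vector of the index form,
\begin{gather*}
0=I_{\gamma}(J,J)=\intop_{t_{0}}^{t_{1}}\left[\left\Vert\nabla^{V}_{\dot\gamma}J\right\Vert_{V}^{2}-\left\langle R_{V}(J,\dot\gamma)\dot\gamma,J\right\rangle_{V}\right]\mathrm{d}t,
\end{gather*}
and as $\gamma$ runs in $\Omega$, where $K_{V}\le0$, the curvature term is non-positive, which forces $\nabla^{V}_{\dot\gamma}J\equiv0$ on $[t_{0},t_{1}]$; being $\nabla^{V}$-parallel and vanishing at $t_{0}$, $J$ vanishes identically, a contradiction. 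Thus no two points of $\gamma$ — in particular neither endpoint $x$ nor $y$ — are conjugate. The main obstacle is the curvature step of the second paragraph: obtaining the conformal change-of-curvature identity with the correct coefficients so that the Hessian-and-gradient terms assemble exactly into $G_{V}$ and the factor $2$ of \eqref{eq:mugeK} lands as required. Once non-positivity of $K_{V}$ on $\Omega$ is secured, the Jacobi-field argument is routine and, unlike the classical statement that a minimal geodesic has no \emph{interior} conjugate points, it also excludes the endpoints.
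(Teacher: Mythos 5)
Your proof follows essentially the same route as the paper: both derive non-positivity of the sectional curvature of $\left(\mathcal{M},\left\langle\cdot,\cdot\right\rangle_{V}\right)$ on $\Omega$ from the conformal change-of-curvature formula combined with \eqref{eq:mugeK}, and then conclude the absence of conjugate points, the paper simply citing the Morse--Schoenberg theorem where you unpack the index-form/Jacobi-field argument and leaving implicit the localization via Proposition~\ref{pro:rogeodOm} that you spell out. The one caveat, which you share with the paper's own text, is that the final inequality $K^{\ast}-\mu_{V}-\frac{1}{4}\left\Vert\nabla V\right\Vert^{2}\le0$ follows from $\mu_{V}\ge 2K^{\ast}$ immediately only where $K^{\ast}\ge0$, so calling it ``precisely Hypothesis (H2)(c)'' is a slight overstatement.
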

\begin{proof}
It is known (see. \cite[sect. 3.6]{GKM71}) that the sectional curvature in
direction $\sigma_{x}(\xi_{1},\xi_{2})$ on Riemannian manifold
$\left(\mathcal{M},\mathrm{e}^{V}\left\langle \cdot,\cdot\right\rangle \right)$ is represented in the form\begin{gather*}
K_{V}\left(\sigma_{x}(\xi_{1},\xi_{2})\right)=\mathrm{e}^{-V}K(\sigma_{x}(\xi_{1},\xi_{2}))-\\ -
\frac{\mathrm{e}^{-V}}{2}\sum_{i=1}^{2}\left[\left\langle H_{V}(x)\xi_{i},\xi_{i}\right\rangle
-\frac{1}{2}\left\langle \nabla V(x),\xi_{i}\right\rangle^{2}\right]-\frac{\mathrm{e}^{-V}}{4}\left\Vert
\nabla V(x)\right\Vert^{2}\end{gather*} where $\xi_{1},\xi_{2}$ is an orthonormal
basis of the plane $\sigma_{x}(\xi_{1},\xi_{2})$, and the inequality
\eqref{eq:mugeK} yields that this curvature is non-positive for any
$x\in\bar{\Omega}$. By the Morse--Schoenberg theorem any $\rho_{V}$-geodesic segment
containing in $\bar{\Omega}$ does not contain conjugate points.
\end{proof}
\begin{prop}
Under the Hypotheses (H1)-(H3) there exists a smooth mapping
$\zeta(\cdot,\cdot):\Omega\times\Omega\mapsto T\mathcal{M}$ such that $\zeta(x,y)\in T_{x}\mathcal{M}$ and\begin{gather}
\exp_{x}^{V}(\zeta(x,y))=y,\quad\mathrm{e}^{V(x)/2}\left\Vert \zeta(x,y)\right\Vert
=\rho_{V}(x,y),\\ \exp_{x}^{V}(t\zeta(x,y))\in\Omega\quad\forall
t\in[0,1].\label{eq:propzeta}\end{gather}

\end{prop}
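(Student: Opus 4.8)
The plan is to \emph{define} $\zeta(x,y)$ as the initial velocity of the unique minimal $\rho_V$-geodesic joining $x$ to $y$, and then to upgrade its evident well-definedness to smoothness by applying the inverse function theorem to the $\rho_V$-exponential map. For $(x,y)\in\Omega\times\Omega$, Hypothesis (H2)(a) provides a unique minimal $\rho_V$-geodesic segment in $\mathcal{D}$ with endpoints $x,y$; by Proposition \ref{pro:rogeodOm} this segment in fact lies in $\Omega$. I would let $\zeta(x,y)\in T_x\mathcal{M}$ be its initial velocity, so that $\exp_x^V(t\zeta(x,y))$ traces the segment for $t\in[0,1]$. The three displayed relations are then immediate: $\exp_x^V(\zeta(x,y))=y$ and $\exp_x^V(t\zeta(x,y))\in\Omega$ hold by construction, while $\mathrm{e}^{V(x)/2}\|\zeta(x,y)\|$ is the $\langle\cdot,\cdot\rangle_V$-norm of the velocity of a constant-speed minimal segment and hence equals its $\rho_V$-length $\rho_V(x,y)$.

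The second step is continuity of $\zeta$. Given $(x_n,y_n)\to(x_0,y_0)$, I would write $\xi_n=\zeta(x_n,y_n)$. Since $V$ is bounded and $\rho_V(x_n,y_n)\to\rho_V(x_0,y_0)$, the norms $\|\xi_n\|=\mathrm{e}^{-V(x_n)/2}\rho_V(x_n,y_n)$ stay bounded, so $(x_n,\xi_n)$ lies in a compact subset of $T\mathcal{M}$. For any subsequential limit $\xi_\ast$ with base point $x_0$, continuity of $\exp^V$ gives $\exp_{x_0}^V(\xi_\ast)=y_0$ and $\mathrm{e}^{V(x_0)/2}\|\xi_\ast\|=\rho_V(x_0,y_0)$, so $t\mapsto\exp_{x_0}^V(t\xi_\ast)$ is a minimal $\rho_V$-geodesic from $x_0$ to $y_0$ lying in $\bar\Omega\subset\mathcal{D}$. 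By the uniqueness in (H2)(a) it coincides with the geodesic defining $\zeta(x_0,y_0)$, whence $\xi_\ast=\zeta(x_0,y_0)$. Every subsequence having the same limit, $\xi_n\to\zeta(x_0,y_0)$, so $\zeta$ is continuous.

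The third step yields smoothness. Consider the smooth map $E:T\mathcal{M}\to\mathcal{M}\times\mathcal{M}$, $E(x,\xi)=(x,\exp_x^V(\xi))$, whose differential is block lower-triangular with diagonal blocks $I$ and $\mathrm{d}(\exp_x^V)_\xi$. Fix $(x_0,y_0)$ and put $\xi_0=\zeta(x_0,y_0)$. By Proposition \ref{pro:uniqmingeo} the minimal geodesic $\exp_{x_0}^V(t\xi_0)$ carries no conjugate points, so $y_0$ is not conjugate to $x_0$ and $\mathrm{d}(\exp_{x_0}^V)_{\xi_0}$ is an isomorphism; hence $\mathrm{d}E$ is invertible at $(x_0,\xi_0)$, and the inverse function theorem makes $E$ a diffeomorphism from a neighborhood $\mathcal{U}$ of $(x_0,\xi_0)$ onto a neighborhood of $(x_0,y_0)$, with smooth inverse $(x,y)\mapsto(x,\psi(x,y))$. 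By the continuity just established, $(x,\zeta(x,y))\in\mathcal{U}$ for $(x,y)$ near $(x_0,y_0)$; since $E(x,\zeta(x,y))=(x,y)$ and $E$ is injective on $\mathcal{U}$, we get $\zeta(x,y)=\psi(x,y)$ there, so $\zeta$ is smooth near $(x_0,y_0)$. As $(x_0,y_0)$ was arbitrary, $\zeta\in\mathrm{C}^\infty(\Omega\times\Omega\!\mapsto\!T\mathcal{M})$.

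The main obstacle is the continuity step, specifically controlling the limiting geodesic: one must ensure that the subsequential limit $\xi_\ast$ generates a geodesic remaining in $\mathcal{D}$ so that Proposition \ref{pro:rogeodOm} and the uniqueness in (H2)(a) apply, which rests on $\bar\Omega\subset\mathcal{D}$ and the compactness of $\bar\Omega$. The decisive structural input, however, is the absence of conjugate points furnished by Proposition \ref{pro:uniqmingeo}; without it $\exp^V$ need not be a local diffeomorphism and the inverse function theorem would be unavailable.
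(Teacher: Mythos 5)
Your proposal is correct and follows essentially the same route as the paper: define $\zeta(x,y)$ via the unique minimal $\rho_V$-geodesic from (H2)(a), use the absence of conjugate points (Proposition~\ref{pro:uniqmingeo}) to make $\exp_x^V$ a local diffeomorphism at $\zeta(x,y)$, and conclude smoothness by the inverse/implicit function theorem. The only difference is that you supply the intermediate continuity argument (needed to identify $\zeta$ with the local smooth inverse branch), which the paper leaves implicit.
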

\begin{proof}
It is known that if for some $\xi\in T_{x}\mathcal{M}$ a geodesic
segment $\exp_{x}^{V}(t\xi),\; t\in[0,1]$, does not contain conjugate
points then the mapping $\exp_{x}^{V}(\cdot)$ is local diffeomorphism
at any point $t\xi$, $t\in[0,1]$. Under the Hypothesis (H2) for
any $x,y\in\Omega$ there exists a unique $\zeta(x,y)$ which satisfies
conditions \eqref{eq:propzeta}. It follows from the implicit function
theorem that the mapping $\zeta(\cdot,\cdot):\Omega\times\Omega\mapsto T\mathcal{M}$
is smooth.
\end{proof}
If we define the mapping \begin{gather*}
\gamma_{V}(\cdot,\cdot,\cdot):[0,1]\times\Omega\times\Omega\!\mapsto\!\Omega,\quad\gamma_{V}(t,x,y):=\exp_{x}^{V}(t\zeta(x,y)),\end{gather*}
then for any $x,y\in\mathcal{D}$ the mapping $\gamma_{V}(\cdot,x,y):[0,1]\!\mapsto\!\mathcal{D}$
satisfies the equation \eqref{eq:Vgeoequ} together with boundary
conditions $\gamma_{V}(0,x,y)=x$, $\gamma_{V}(1,x,y)=y$. The following
scalar differential equation \begin{gather*}
\frac{\mathrm{d}\tau}{\mathrm{d}s}=\exp\left(V\circ\gamma_{V}(\tau,x,y)\right)\intop_{0}^{1}\exp\left(-V\circ\gamma_{V}(t,x,y)\right)\mathrm{d}t.\end{gather*}
 has a unique strictly monotonically increasing solution \begin{equation}
\tau(\cdot,x,y):[0,1]\mapsto[0,1],\quad\tau(0,x,y)=0,\quad\tau(1,x,y)=1.\label{eq:tau}\end{equation}
 By means of reparametrisation $t=\tau(s,x,y)$ we define a smooth
mapping\begin{gather*}
\chi(\cdot,\cdot,\cdot):[0,1]\times\Omega\times\Omega\!\mapsto\!\Omega,\quad\chi(s,x,y):=\gamma_{V}(\tau(s,x,y),x,y)\end{gather*}
 which plays an important role in subsequent reasoning. In \cite{ZakPar99}
$\chi(\cdot,\cdot,\cdot)$ is called the connecting mapping.

\begin{prop}
\label{pro:chiprop}For any $x,y\in\Omega$ the mapping $\chi(\cdot,x,y):[0,1]\!\mapsto\!\Omega$
satisfies the equation\begin{gather}
\nabla_{x^{\prime}}x^{\prime}=\frac{\left\Vert x^{\prime}\right\Vert^{2}}{2}\nabla V(x),\label{eq:chiequ}\end{gather}
 where $x^{\prime}=\frac{\mathrm{d}x}{\mathrm{d}s}$ and the boundary
conditions $\chi(0,x,y)=x$, $\chi(1,x,y)=y$.
\end{prop}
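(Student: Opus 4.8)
The plan is to derive \eqref{eq:chiequ} by a direct chain-rule computation, exploiting the fact that the scalar reparametrization $\tau(\cdot,x,y)$ of \eqref{eq:tau} has been engineered precisely to annihilate the tangential term $-\langle\nabla V,\dot{x}\rangle\dot{x}$ that distinguishes the $\rho_V$-geodesic equation \eqref{eq:Vgeoequ} from \eqref{eq:chiequ}. Fix $x,y\in\Omega$, abbreviate $\gamma(t):=\gamma_V(t,x,y)$ with $\dot\gamma:=\frac{\mathrm{d}}{\mathrm{d}t}\gamma$, write $\tau':=\frac{\mathrm{d}\tau}{\mathrm{d}s}$, $\tau'':=\frac{\mathrm{d}^{2}\tau}{\mathrm{d}s^{2}}$, and set $x(s):=\chi(s,x,y)=\gamma(\tau(s))$. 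The boundary conditions are immediate: from $\tau(0)=0$, $\tau(1)=1$ together with $\gamma(0)=x$, $\gamma(1)=y$ we get $\chi(0,x,y)=x$ and $\chi(1,x,y)=y$.

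First I would differentiate the composition. The chain rule gives $x'=\tau'\,\dot\gamma(\tau)$, and applying the covariant derivative along $x(\cdot)$ with the Leibniz rule and the chain rule for $\nabla$ yields
\begin{gather*}
\nabla_{x'}x'=\tau''\,\dot\gamma(\tau)+(\tau')^{2}\,\nabla_{\dot\gamma}\dot\gamma\big|_{t=\tau}.
\end{gather*}
Substituting Proposition \ref{pro:Vgeoeq}, namely $\nabla_{\dot\gamma}\dot\gamma=-\langle\nabla V,\dot\gamma\rangle\dot\gamma+\frac{1}{2}\|\dot\gamma\|^{2}\nabla V$, and using $\|x'\|^{2}=(\tau')^{2}\|\dot\gamma\|^{2}$, the terms proportional to $\nabla V$ already collapse to the desired right-hand side $\frac{1}{2}\|x'\|^{2}\nabla V$. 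It therefore remains only to show that the two contributions directed along $\dot\gamma$ cancel, i.e. that $\tau''=(\tau')^{2}\langle\nabla V,\dot\gamma\rangle$.

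The crux is thus to verify this scalar identity from the defining ODE \eqref{eq:tau}. I would rewrite that ODE as $\tau'=I\,\mathrm{e}^{V\circ\gamma(\tau)}$ with the $s$-independent constant $I:=\int_{0}^{1}\mathrm{e}^{-V\circ\gamma(t)}\mathrm{d}t$, differentiate once more in $s$, and use $\frac{\mathrm{d}}{\mathrm{d}s}\bigl(V\circ\gamma(\tau)\bigr)=\tau'\langle\nabla V,\dot\gamma\rangle$ to obtain $\tau''=I\,\mathrm{e}^{V\circ\gamma(\tau)}\,\tau'\langle\nabla V,\dot\gamma\rangle=(\tau')^{2}\langle\nabla V,\dot\gamma\rangle$, which is exactly the required cancellation. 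This single step is the only non-formal part of the argument; it is really an algebraic identity rather than a genuine obstacle, reflecting that the whole purpose of constructing $\tau$ is to convert the $\rho_V$-geodesic \eqref{eq:Vgeoequ} into the affinely reparametrized form \eqref{eq:chiequ}. A minor point worth recording is that $\tau'=I\,\mathrm{e}^{V\circ\gamma}>0$ throughout, consistent with $\tau(\cdot)$ being the strictly monotone reparametrization of \eqref{eq:tau}, so that $\chi(\cdot,x,y)$ is a well-defined smooth curve on $[0,1]$.
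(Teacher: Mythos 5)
Your proposal is correct and follows essentially the same route as the paper: both arguments reduce \eqref{eq:chiequ} to the $\rho_V$-geodesic equation \eqref{eq:Vgeoequ} via the chain rule for the reparametrization $t=\tau(s)$, with the key cancellation of the tangential term coming from the identity $\tau''/\tau'=\frac{\mathrm{d}}{\mathrm{d}s}(V\circ\chi)$, which is exactly your $\tau''=(\tau')^{2}\langle\nabla V,\dot\gamma\rangle$ read off from the defining ODE \eqref{eq:tau}. The only difference is presentational (you expand $\nabla_{x'}x'$ forward, the paper substitutes $\dot x=\chi'/\tau'$ into \eqref{eq:Vgeoequ}), so nothing further is needed.
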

\begin{proof}
The boundary conditions follow from definition of $\gamma_{V}$ and \eqref{eq:tau}.
Let us show that \eqref{eq:chiequ} is obtained from \eqref{eq:Vgeoequ} after
the change of independent variable $t=\tau(s)$. In fact, let $\chi(s)=x\circ\tau(s)$.
Then \eqref{eq:Vgeoequ} takes the form\textit{\begin{gather*}
\frac{1}{\tau^{\prime}}\nabla_{\chi^{\prime}}\left(\frac{1}{\tau^{\prime}}\chi^{\prime}\right)=-\frac{1}{(\tau^{\prime})^{2}}\left\langle
\nabla V\circ\chi,\chi^{\prime}\right\rangle \chi^{\prime}+\frac{\left\Vert
\chi^{\prime}\right\Vert^{2}}{2(\tau^{\prime})^{2}}\nabla V\circ\chi,\end{gather*}}
or\begin{gather*}
-\frac{\tau^{\prime\prime}}{\tau^{\prime}}\chi^{\prime}+\nabla_{\chi^{\prime}}\chi^{\prime}=-\left[\frac{\mathrm{d}}{\mathrm{d}s}V\circ\chi\right]\chi^{\prime}+\frac{\left\Vert
\chi^{\prime}\right\Vert^{2}}{2}\nabla V\circ\chi.\end{gather*}
 From this it follows \eqref{eq:chiequ} since $\tau^{\prime\prime}/\tau^{\prime}=\left(V\circ\chi\right)^{\prime}$.
\end{proof}
\begin{prop}
\label{pro:convJ1}Let $u_{i}(\cdot)\in\mathcal{S}_{\Omega}$, $i=0,1$.
Then under the hypotheses (H1)-(H2) the following inequality is valid\begin{gather*}
\frac{\mathrm{d}^{2}}{\mathrm{d}s^{2}}\left\Vert D_{\omega}\chi\left(s,u_{0}(\varphi),u_{1}(\varphi)\right)\right\Vert^{2}\ge0\quad\forall s\in[0,1],\;\forall\varphi\in\mathbb{T}^{k}.\end{gather*}

\end{prop}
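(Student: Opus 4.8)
The plan is to treat $\chi$ as a smooth two-parameter map and carry out a second-variation computation directly. Fix $\varphi$ and set $\Gamma(s,\varphi):=\chi(s,u_{0}(\varphi),u_{1}(\varphi))$; since $u_{0},u_{1}\in\mathcal{S}_{\Omega}$ and $\chi$ is smooth, $\Gamma$ is a smooth map $[0,1]\times\mathbb{T}^{k}\!\mapsto\!\Omega$. Write $X:=\partial_{s}\Gamma$ and $Y:=D_{\omega}\Gamma=\sum_{i}\omega_{i}\partial_{\varphi_{i}}\Gamma$; here $Y$ is precisely the field $D_{\omega}\chi(s,u_{0}(\varphi),u_{1}(\varphi))$ whose norm must be studied, the weak derivatives coinciding with the classical ones. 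I will use two structural facts. First, by Proposition \ref{pro:chiprop}, for each fixed $\varphi$ the curve $s\mapsto\Gamma(s,\varphi)$ satisfies \eqref{eq:chiequ}, i.e. $\nabla_{X}X=\tfrac12\|X\|^{2}\nabla V$. Second, since the Levi--Civita connection is torsion-free and $\partial_{s},\partial_{\varphi_{i}}$ commute, the symmetry lemma for the pullback connection gives $\nabla_{\partial_{s}}Y=\nabla_{D_{\omega}}X$ (apply $\nabla_{\partial_{s}}\partial_{\varphi_{i}}\Gamma=\nabla_{\partial_{\varphi_{i}}}\partial_{s}\Gamma$ and sum against $\omega_{i}$).

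Next I differentiate twice. From $\tfrac{d}{ds}\|Y\|^{2}=2\langle\nabla_{\partial_{s}}Y,Y\rangle$ I obtain $\tfrac12\tfrac{d^{2}}{ds^{2}}\|Y\|^{2}=\langle\nabla_{\partial_{s}}\nabla_{\partial_{s}}Y,Y\rangle+\|\nabla_{\partial_{s}}Y\|^{2}$. Writing $Z:=\nabla_{\partial_{s}}Y$, I commute covariant derivatives through the curvature tensor: $\nabla_{\partial_{s}}\nabla_{\partial_{s}}Y=\nabla_{\partial_{s}}\nabla_{D_{\omega}}X=\nabla_{D_{\omega}}\nabla_{\partial_{s}}X+R(X,Y)X$. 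Substituting $\nabla_{\partial_{s}}X=\tfrac12\|X\|^{2}\nabla V$ and differentiating along $D_{\omega}$, using $D_{\omega}\|X\|^{2}=2\langle Z,X\rangle$ and $\nabla_{D_{\omega}}(\nabla V)=H_{V}Y$ (since $H_{V}(x)\xi=\nabla_{\xi}\nabla V$), gives $\nabla_{D_{\omega}}\nabla_{\partial_{s}}X=\langle Z,X\rangle\nabla V+\tfrac12\|X\|^{2}H_{V}Y$. Collecting all contributions,
\[
\tfrac12\tfrac{d^{2}}{ds^{2}}\|Y\|^{2}=\|Z\|^{2}+\langle Z,X\rangle\langle\nabla V,Y\rangle+\tfrac12\|X\|^{2}\langle H_{V}Y,Y\rangle+\langle R(X,Y)X,Y\rangle.
\]

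The last step is an estimate. Completing the square in $Z$ yields $\|Z\|^{2}+\langle Z,X\rangle\langle\nabla V,Y\rangle=\bigl\|Z+\tfrac12\langle\nabla V,Y\rangle X\bigr\|^{2}-\tfrac14\langle\nabla V,Y\rangle^{2}\|X\|^{2}\ge-\tfrac14\langle\nabla V,Y\rangle^{2}\|X\|^{2}$, and the leftover term combines with the Hessian term to reproduce exactly the quadratic form of the hypotheses, $\tfrac12\|X\|^{2}\bigl(\langle H_{V}Y,Y\rangle-\tfrac12\langle\nabla V,Y\rangle^{2}\bigr)=\tfrac12\|X\|^{2}\langle G_{V}Y,Y\rangle$. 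For the curvature term the sign convention of (H2) gives $\langle R(X,Y)X,Y\rangle=-K(\sigma_{x}(X,Y))(\|X\|^{2}\|Y\|^{2}-\langle X,Y\rangle^{2})\ge-K^{\ast}(x)(\|X\|^{2}\|Y\|^{2}-\langle X,Y\rangle^{2})$. Hence, using $\langle G_{V}Y,Y\rangle\ge\mu_{V}(x)\|Y\|^{2}$,
\[
\tfrac12\tfrac{d^{2}}{ds^{2}}\|Y\|^{2}\ge\tfrac12\|X\|^{2}\mu_{V}(x)\|Y\|^{2}-K^{\ast}(x)\bigl(\|X\|^{2}\|Y\|^{2}-\langle X,Y\rangle^{2}\bigr),
\]
and this is nonnegative by Hypothesis (H2)(c), $\mu_{V}(x)\ge2K^{\ast}(x)$ (bounding $\|X\|^{2}\|Y\|^{2}-\langle X,Y\rangle^{2}\le\|X\|^{2}\|Y\|^{2}$), which is the asserted inequality.

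I expect the main obstacle to be the sign and convention bookkeeping at the two critical junctures: deriving the curvature identity $\nabla_{\partial_{s}}\nabla_{D_{\omega}}X-\nabla_{D_{\omega}}\nabla_{\partial_{s}}X=R(X,Y)X$ with the correct sign and matching $\langle R(X,Y)X,Y\rangle$ to the sectional-curvature expression appearing in (H2), and recognizing that the completion of the square in $Z$ is exactly what converts the bare Hessian $H_{V}$ into the form $G_{V}$ that the hypotheses are stated in terms of. The delicate interaction between the curvature estimate and (H2)(c) (through the area factor $\|X\|^{2}\|Y\|^{2}-\langle X,Y\rangle^{2}$) is where care is required; once these are handled the conclusion is immediate.
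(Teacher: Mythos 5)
Your argument is essentially identical to the paper's proof: the same second-variation computation with $\xi=\partial_{s}\chi$ and $\eta=D_{\omega}\chi$, the same use of the symmetry of the connection and the curvature commutation identity together with \eqref{eq:chiequ}, and the same final estimate combining the form $G_{V}$ with the sectional-curvature bound of Hypothesis (H2)(c). The only cosmetic difference is that you complete the square in $Z=\nabla_{\partial_{s}}Y$ directly, whereas the paper normalizes by $r=\left\Vert \nabla_{\xi}\eta\right\Vert /(\left\Vert \xi\right\Vert \left\Vert \eta\right\Vert )$ and minimizes the resulting quadratic in $r$; both routes produce the same $-\tfrac{1}{4}\left\langle \nabla V,\eta\right\rangle^{2}$ correction that converts $H_{V}$ into $G_{V}$.
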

\begin{proof}
For any fixed $\varphi\in\mathbb{T}^{k}$ put \begin{gather*}
\eta(s,t):=\frac{\partial}{\partial t}\chi\left(s,u_{0}(\varphi+\omega t),u_{1}(\varphi+\omega t)\right)\equiv D_{\omega}\chi\left(s,u_{0}(\varphi+\omega t),u_{1}(\varphi+\omega t)\right),\\
\xi(s,t):=\frac{\partial}{\partial s}\chi\left(s,u_{0}(\varphi+\omega t),u_{1}(\varphi+\omega t)\right).\end{gather*}
 Then in view of the well known relations (see. e.g., \cite{GKM71},DNF84) \begin{gather*}
\nabla_{\eta}\xi=\nabla_{\xi}\eta,\quad\nabla_{\eta}\nabla_{\xi}\xi-\nabla_{\xi}\nabla_{\eta}\xi=R(\eta,\xi)\xi\end{gather*}
 and \eqref{eq:chiequ}, we have \textit{\begin{gather*}
\nabla_{\xi}^{2}\eta=\nabla_{\eta}\nabla_{\xi}\xi-R(\eta,\xi)\xi=\\ =\left\langle
\nabla_{\eta}\xi,\xi\right\rangle \nabla V\circ\chi+\frac{\left\Vert
\xi\right\Vert^{2}}{2}\nabla_{\eta}\nabla V\circ\chi-R(\eta,\xi)\xi\end{gather*}} and hence,
\begin{gather*} \frac{\mathrm{d}^{2}}{\mathrm{d}s^{2}}\left\Vert \eta\right\Vert^{2}=2\left[\left\langle
\nabla_{\xi}^{2}\eta,\eta\right\rangle +\left\Vert \nabla_{\xi}\eta\right\Vert^{2}\right]=\\ =2\left\Vert
\nabla_{\xi}\eta\right\Vert^{2}+2\left\langle \nabla_{\xi}\eta,\xi\right\rangle \left\langle \nabla V\circ\chi,\eta\right\rangle
+\\ +\left\Vert \xi\right\Vert^{2}\left\langle \nabla_{\eta}\nabla V\circ\chi,\eta\right\rangle -2\left\langle
R(\eta,\xi)\xi,\eta\right\rangle \ge\\ \ge2\left\Vert \nabla_{\xi}\eta\right\Vert^{2}-2\left\Vert
\nabla_{\xi}\eta\right\Vert \left\Vert \xi\right\Vert \left|\left\langle \nabla V\circ\chi,\eta\right\rangle \right|+\\ +
\left\Vert \xi\right\Vert^{2}\left\langle \nabla_{\eta}\nabla V\circ\chi,\eta\right\rangle -2K^{\ast}\circ\chi\left\Vert
\xi\right\Vert^{2}\left\Vert \eta\right\Vert^{2}.\end{gather*} \textit{\emph{Once the
Hypothesis (H2) holds true, we get \begin{gather*}
\frac{\mathrm{d}^{2}}{\mathrm{d}s^{2}}\left\Vert \eta\right\Vert^{2}\ge \\ \ge 2\left\Vert
\xi\right\Vert^{2}\left\Vert \eta\right\Vert^{2}\left[r^{2}- \left|\left\langle \nabla V\circ\chi,\mathbf{e}\right\rangle
\right|r+\frac{1}{2}\left\langle \nabla_{\mathbf{e}}\nabla V\circ\chi,\mathbf{e}\right\rangle
-K^{\ast}\circ\chi\right]\ge0\end{gather*}
 where $r:=\frac{\left\Vert \nabla_{\xi}\eta\right\Vert}{\left\Vert \xi\right\Vert \left\Vert \eta\right\Vert}$}}.
\end{proof}
Now we are in position to prove the Theorem~\ref{thm:Jprop}. Let
$u_{i}(\cdot)\in\mathcal{S}_{\Omega}$, $i=0,1$. By means of connecting
mapping we get the following representation\begin{gather}
J[\chi(s,u_{0},u_{1})]=J[u_{0}]+sJ^{\prime}[u_{0}]\left(\chi_{s}^{\prime}(0,u_{0},u_{1})\right)+\frac{s^{2}}{2}\frac{\mathrm{d}^{2}}{\mathrm{d}s^{2}}\Bigl|_{s=\theta}J\left[\chi\left(s,u_{0},u_{1}\right)\right]\label{eq:TailJ}\end{gather}
with some $\theta\in(0,1)$. To estimate from below the term with
second derivative we make use of Proposition~\ref{pro:convJ1} which
together with the Hypothesis (H3) implies

\begin{gather*}
\frac{\mathrm{d}^{2}}{\mathrm{d}s^{2}}\left[\frac{1}{2}\left\Vert D_{\omega}\chi\left(s,u_{0}(\varphi),u_{1}(\varphi)\right)\right\Vert^{2}+W\left(\varphi,\chi(s,u_{0},u_{1})\right)\right]\ge\\
\ge\frac{\mathrm{d}}{\mathrm{d}s}\left\langle \nabla W(\varphi,\chi),\chi_{s}^{\prime}\right\rangle =\left\langle \nabla_{\chi_{s}^{\prime}}\nabla W(\varphi,\chi),\chi_{s}^{\prime}\right\rangle +\left\langle \nabla W(\varphi,\chi),\nabla_{\chi_{s}^{\prime}}\chi_{s}^{\prime}\right\rangle =\\
=\left\langle \nabla_{\chi_{s}^{\prime}}\nabla W(\varphi,\chi),\chi_{s}^{\prime}\right\rangle +\frac{\left\Vert \chi_{s}^{\prime}\right\Vert^{2}}{2}\left\langle \nabla W(\varphi,\chi),\nabla V(\chi)\right\rangle \ge\varkappa\left\Vert \chi_{s}^{\prime}\right\Vert^{2}.\end{gather*}
 By the definition of $\chi$ we have \begin{gather*}
\chi_{s}^{\prime}\left(s,u_{0},u_{1}\right)=\tau^{\prime}(s)\dot{\gamma}_{V}\left(\tau(s),u_{0},u_{1}\right)=\\
=\exp\left(V\circ\gamma_{V}(\tau(s),u_{0},u_{1})\right)\intop_{0}^{1}\exp\left(-V\circ\gamma_{V}(t,u_{0},u_{1})\right)\mathrm{d}t\dot{\gamma}_{V}\left(\tau(s),u_{0},u_{1}\right).\end{gather*}
 Since $\gamma_{V}(t,x,y)$ is $\rho_{V}-$geodesic,{
}then $\exp\left(V\circ\gamma_{V}\right)\left\Vert \dot{\gamma}_{V}\right\Vert^{2}$
does not depend on $t$ and \[
\mathrm{e}^{V(x)/2}\left\Vert \dot{\gamma}_{V}(0,x,y)\right\Vert =\mathrm{e}^{V(x)/2}\left\Vert \zeta(x,y)\right\Vert =\rho_{V}(x,y).\]
 Hence\begin{gather*}
\left\Vert \chi_{s}^{\prime}\left(s,u_{0},u_{1}\right)\right\Vert^{2}=
\left[\intop_{0}^{1}\exp\left(-V\circ\gamma_{V}(t,u_{0},u_{1})\right)\mathrm{d}t\right]^{2}\times \\ \times
\exp\left(V\circ\gamma_{V}(\tau(s),u_{0},u_{1})\right)\rho_{V}^{2}(u_{0},u_{1}),\end{gather*}
 and \eqref{eq:propzeta} implies that there exist positive constants
$C$, $c$ dependent only on $V(\cdot)$ and $\Omega$ such that \begin{gather}
c\rho(u_{0},u_{1})\le\left\Vert \chi_{s}^{\prime}\left(s,u_{0},u_{1}\right)\right\Vert \le C\rho(u_{0},u_{1}).\label{eq:normchiprime}\end{gather}
 Define $h(\varphi):=\chi_{s}^{\prime}\left(0,u_{0}(\varphi),u_{2}(\varphi)\right)$.
Then \eqref{eq:TailJ} with $s=1$ yields \begin{gather*}
J[u_{1}]-J[u_{0}]-J^{\prime}[u_{0}]\left(\chi_{s}^{\prime}(0,u_{0},u_{1})\right)\ge\frac{\varkappa c^{2}}{2}\intop_{\mathbb{T}^{k}}\rho^{2}(u_{0},u_{1})\mathrm{d}\varphi.\end{gather*}
 Finally, since the set $\Omega$ is bounded and the mapping $\chi$
is smooth, there exists positive constant $C_{1}$ such that \begin{gather*}
\left\Vert D_{\omega}h(\varphi)\right\Vert \le C_{1}\left[\left\Vert D_{\omega}u_{0}(\varphi)\right\Vert +\left\Vert D_{\omega}u_{1}(\varphi)\right\Vert \right]\quad\forall\varphi\in\mathbb{T}^{k}.\end{gather*}
 The proof of Theorem~\ref{thm:Jprop} is complete.

\medskip{}

\textbf{4. Main existence theorem.} Now we proceed to the main result
of this paper.

\begin{thm}
\label{thm:existqrs}Let the Hypotheses (H1)--(H3) hold true. Then
the natural system on Riemannian manifold $\left(\mathcal{M},\left\langle \cdot,\cdot\right\rangle \right)$
with Lagrangian density $L=\frac{1}{2}\langle\dot{x},\dot{x}\rangle+W(\omega t,x)$
has a weak quasiperiodic solution.
\end{thm}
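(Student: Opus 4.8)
The plan is to run the direct method of the calculus of variations on the restricted functional $J\bigl|_{\mathcal{S}_{\Omega}}$, using Theorem~\ref{thm:Jprop} as the source of compactness. First I would note that $J$ is bounded below on $\mathcal{S}_{\Omega}$: the kinetic term $\frac12\|D_\omega u\|^2$ is nonnegative and $W$ is continuous, hence bounded, on the compact set $\mathbb{T}^{k}\times\bar\Omega$. Set $d:=\inf_{\mathcal{S}_{\Omega}}J$ and pick a minimizing sequence $u_{j}(\cdot)\in\mathcal{S}_{\Omega}$. Since $J[u_{j}]\to d$ and $\int_{\mathbb{T}^{k}}W(\varphi,u_{j})\,\mathrm{d}\varphi$ stays bounded, the averaged kinetic energies $\|D_\omega u_{j}\|_{0}$ are bounded; as $\Omega$ is bounded, $\|u_{j}\|_{0}$ is bounded too, so $\{u_{j}\}$ is bounded in $\mathrm{H}^{1}_{\omega}(\mathbb{T}^{k}\!\mapsto\!\mathbb{E}^{n})$.

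The core step is to upgrade this to strong convergence via the uniform convexity encoded in Theorem~\ref{thm:Jprop}. Applying Proposition~\ref{pro:convJ1} together with \eqref{eq:normchiprime} to $u_{0}=u_{m}$, $u_{1}=u_{n}$ shows that $s\mapsto J[\chi(s,u_{m},u_{n})]$ satisfies $\frac{\mathrm{d}^{2}}{\mathrm{d}s^{2}}J[\chi(s,u_{m},u_{n})]\ge\varkappa c^{2}\int_{\mathbb{T}^{k}}\rho^{2}(u_{m},u_{n})\,\mathrm{d}\varphi=:\mu_{mn}$, a bound independent of $s$. Strong convexity with constant $\mu_{mn}$ yields the midpoint estimate
\[
J[\chi(\tfrac12,u_{m},u_{n})]\le\tfrac12\bigl(J[u_{m}]+J[u_{n}]\bigr)-\tfrac18\mu_{mn}.
\]
Because $\chi(\tfrac12,u_{m}(\cdot),u_{n}(\cdot))\in\mathcal{S}_{\Omega}$, its left-hand side is $\ge d$, whence $\mu_{mn}\le 4\bigl(J[u_{m}]+J[u_{n}]\bigr)-8d\to0$. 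Thus $\int_{\mathbb{T}^{k}}\rho^{2}(u_{m},u_{n})\,\mathrm{d}\varphi\to0$, and since the Euclidean distance in $\mathbb{E}^{n}$ is dominated by $\rho$, the sequence $\{u_{j}\}$ is Cauchy in $\|\cdot\|_{0}$. Let $u_{\ast}$ be its strong limit. By the semicontinuity discussion of Section~2, $u_{\ast}\in\mathcal{H}_{\bar\Omega}$, $u_{j}\rightharpoonup u_{\ast}$ weakly in $\mathrm{H}^{1}_{\omega}$, and $\|D_\omega u_{\ast}\|_{0}\le\liminf\|D_\omega u_{j}\|_{0}$; passing to an a.e.\ convergent subsequence and using boundedness of $W$ on $\mathbb{T}^{k}\times\bar\Omega$, the potential term converges, so $J[u_{\ast}]\le d$ and $u_{\ast}$ minimizes $J$ over $\mathcal{H}_{\bar\Omega}$.

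It remains to show that $u_{\ast}(\omega t)$ is a weak solution, i.e.\ that the first variation \eqref{eq:varJeq0} vanishes on every vector field $h(\cdot)$ along $u_{\ast}$. I would first prove that $u_{\ast}$ takes values in the open set $\Omega$, away from $\partial\Omega$: if $u_{\ast}$ reached $\partial\Omega$ on a set of positive measure, then, since $\nabla V$ is the outward normal to $\partial\Omega$ and $\langle\nabla W(\varphi,x),\nabla V(x)\rangle>0$ there by (H3), an inward normal variation would strictly decrease the potential part of $J$, with the convexity supplied by (H1) and (H2)(b) controlling the kinetic part, contradicting minimality. With $u_{\ast}$ confined to a compact subset of $\Omega$, the variations $u_{\ast}^{\pm\varepsilon}(\varphi):=\exp_{u_{\ast}(\varphi)}(\pm\varepsilon h(\varphi))$ are admissible for small $\varepsilon>0$ (approximating $h$ by the fields $h_{j}$ along $u_{j}$ from the definition of a vector field along $u_{\ast}$). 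Minimality gives $\pm J^{\prime}[u_{\ast}](h)\ge0$, hence $J^{\prime}[u_{\ast}](h)=0$; the passage to the limit uses the weak--strong pairing $\langle D_\omega u_{j},D_\omega h_{j}\rangle_{0}\to\langle D_\omega u_{\ast},D_\omega h\rangle_{0}$ and a.e.\ convergence in $\langle\nabla W(\varphi,u_{j}),h_{j}\rangle_{0}$. This is exactly \eqref{eq:defgensol}, so $u_{\ast}(\omega t)$ is the desired weak quasiperiodic solution.

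I expect the genuine obstacle to be this last step rather than the compactness: making the first-variation argument rigorous for a minimizer $u_{\ast}$ that is only of class $\mathrm{H}^{1}_{\omega}$ and constrained to lie on $\mathcal{M}$, and in particular establishing the boundary-avoidance. Constructing admissible variations that simultaneously respect the manifold constraint $x\in\mathcal{M}$ and the confinement to $\bar\Omega$, and proving that minimality forces $u_{\ast}$ off $\partial\Omega$, is precisely where the second-fundamental-form positivity in (H2)(b) and the strict inequalities of (H3) must enter; the compactness established in the first two steps is, by contrast, a fairly direct consequence of Theorem~\ref{thm:Jprop}.
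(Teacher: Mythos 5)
Your compactness argument is essentially sound and in one respect cleaner than the paper's: the midpoint estimate $J[\chi(\tfrac12,u_m,u_n)]\le\tfrac12(J[u_m]+J[u_n])-\tfrac18\mu_{mn}$, combined with $\chi(\tfrac12,u_m(\cdot),u_n(\cdot))\in\mathcal{S}_{\Omega}$ and hence $J[\chi(\tfrac12,u_m,u_n)]\ge d$, gives the Cauchy property of the minimizing sequence directly from Proposition~\ref{pro:convJ1} and \eqref{eq:normchiprime}, whereas the paper first has to establish $J^{\prime}[u_j](h_{ij})\to0$ and then reads the Cauchy property off the inequality of Theorem~\ref{thm:Jprop}. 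Up to and including the strong $\|\cdot\|_0$-limit $u_\ast$, your route works.

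The genuine gap is exactly where you suspect it: the first-variation step. Your plan hinges on proving that $u_\ast$ is ``confined to a compact subset of $\Omega$,'' i.e.\ stays uniformly away from $\partial\Omega$, so that the two-sided variations $\exp_{u_\ast}(\pm\varepsilon h)$ remain admissible. The sketch you give does not deliver this: an inward normal variation changes the kinetic term at first order by $\langle D_\omega u_\ast,D_\omega h\rangle_0$, which has no sign, so ``convexity controlling the kinetic part'' does not rule out boundary contact; and even if $u_\ast(\varphi)\in\Omega$ a.e., its values may accumulate on $\partial\Omega$, in which case $\exp_{u_\ast}(\varepsilon h)$ exits $\bar\Omega$ on a set of positive measure for every $\varepsilon>0$. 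Without boundary avoidance a constrained minimizer only yields the one-sided inequality $J^{\prime}[u_\ast](h)\ge0$ for inward-pointing $h$, not \eqref{eq:varJeq0}. The paper circumvents this entirely with the construction you omit: it builds the smoothed projection $P_{\varepsilon,\Omega}:\Omega+\delta\to\Omega$, which is derivative-contractive (\eqref{eq:Pcontract}, \eqref{eq:PepsOm}) precisely because (H2)(b) makes the second fundamental form of $\partial\Omega$ positive, and does not increase $W$ (\eqref{W}) precisely because $\langle\nabla W,\nabla V\rangle>0$ on $\partial\Omega$ by (H3). This yields $\inf J[\mathcal{S}_{\Omega+\delta}]=\inf J[\mathcal{S}_{\Omega}]$ (\eqref{eq:infJ}), so the two-sided variations $\exp_{u_j}(sh_j)\in\mathcal{S}_{\Omega+\delta}$ about the \emph{smooth} elements $u_j$ of the minimizing sequence are legitimate competitors, giving $J^{\prime}[u_j](h_j)\to0$ (\eqref{eq:limJ'}); one then passes to the limit $J^{\prime}[u_\ast](h)=\lim_j J^{\prime}[u_j](h_j)=0$ using the weak--strong pairing and the Lebesgue theorem. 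You would need either to supply this projection argument (or an equivalent device) or to give an actual proof of boundary avoidance; as written, the final step does not close.
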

\begin{proof}
The proof will consist of three steps.

1. Construction of a projection mapping and its smooth approximation. Put
$\Omega+\delta=\left(\bigcup_{x\epsilon\Omega}B(x;\delta)\right)$ where $B(x;\delta)$ stands for an open ball
of radius $\delta$ centered at $x\in\mathcal{M}$ on Riemannian manifold $\left(\mathcal{M},\left\langle
\cdot,\cdot\right\rangle \right)$. Since by Hypothesis (H2) $v$ is a noncritical value, then
$\partial\Omega=V^{-1}(v)$ is a regular hypersurface with unit normal field
$\mathbf{\boldsymbol{\nu}:}=\frac{{\nabla}V}{\left\Vert {\nabla}V\right\Vert}$. As is well
known (see, e.g., \cite{BisCrit64}), for sufficiently small $\delta>0$, one can
correctly define the projection mapping $P_{\Omega}:\Omega+\delta\rightarrow\bar{\Omega}$ such
that $P_{\Omega}x\in\bar{\Omega}$ is the nearest point to $x\in\Omega+\delta$. If $x=X(q),$
$q\in\mathcal{Q}\subset\mathbb{R}^{m-1}$, is a smooth local parametric representation
of $\partial\Omega$ in a neighborhood of a point $x_{0}\in\partial\Omega$, then for sufficiently
small $\delta_{0}>0$ the mapping \[
\mathcal{Q}\times(-\delta_{0},\delta_{0})\ni(q,z)\mapsto\exp_{X(q)}\left(z\boldsymbol{\nu}\circ
X(q)\right)\]
 introduces local coordinates with the following properties: local
equation of $\partial\Omega$ is $z=0$; each naturally parametrized
$\rho$-geodesic $\gamma(s)=\exp_{X(q)}\left(s\boldsymbol{\nu}\circ X(q)\right)$
is orthogonal to each hypersurface $z=\mathrm{const}$; the Riemannian
metric takes the form $\sum_{i,j=1}^{m-1}b_{ij}(q,z)\mathrm{d}q_{i}\mathrm{d}q_{j}+\mathrm{d}z^{2}$,
where $B(q,z)=\left\{ b_{ij}(q,z)\right\}_{i,j=1}^{m-1}$ is positive
definite symmetric matrix; the function $V(\cdot)$ is represented
in the form $V(q,z)=v+a(q)z+b(q,z)z^{2}$; the mapping $P_{\Omega}$
has the form \begin{gather*}
P_{\Omega}(q,z):=\begin{cases}
(q,0) & \text{if}\quad z\in(0,\delta_{0}),\\
(q,z) & \text{if}\quad z\in(-\delta_{0},0].\end{cases}\end{gather*}
 The projection mapping is continuous on $\Omega+\delta$ and continuously
differentiable on $(\Omega+\delta)\backslash\partial\Omega$. Moreover,
it turns out that for sufficiently small $\delta>0$ the derivative
$P_{\Omega\ast}$ is contractive on $(\Omega+\delta)\backslash\partial\Omega$,
i.e.\begin{gather}
\left\Vert P_{\Omega\ast}\xi\right\Vert \le\left\Vert \xi\right\Vert \quad\forall\xi\in T_{x}\mathcal{M},\; x\in(\Omega+\delta)\backslash\partial\Omega.\label{eq:Pcontract}\end{gather}
 It is sufficiently to prove this inequality for any $x\in(\Omega+\delta)\backslash\partial\Omega$.
Let $q=q(s)$, $z=z(s)$ be natural equations of $\rho$-geodesic
which starts at a point $x_{0}=(q_{0},0)\in\partial\Omega$ in direction
of vector $\mathbf{\eta=}(\dot{q}_{0},0)\in T_{x_{0}}\partial\Omega$.
The hypothesis (H2) implies that \begin{gather*}
\left\langle \nabla_{\eta}\nabla V(x_{0}),\mathbf{\eta}\right\rangle =\frac{\mathrm{d}^{2}}{\mathrm{d}s^{2}}\Bigl|_{s=0}V(q(s),z(s))>0\quad\Leftrightarrow\quad a(q_{0})\ddot{z}(0)>0.\end{gather*}
 Since $a(q_{0})>0$ ($\boldsymbol{\nu}$ is external normal to $\partial\Omega$)
and $z$-component of geodesic equations is\begin{gather*}
\ddot{z}=\frac{1}{2}\frac{\partial}{\partial z}\sum_{i,j=1}^{m-1}b_{ij}(q,z)\dot{q}_{i}^{2}\dot{q}_{j}^{2},\end{gather*}
 then the matrix $B_{z}^{\prime}(q_{0},0)$ is positive definite.
From this it follows that $B(q,z_{1})>B(q,z_{2})$ for all $q$ from a
neighborhood of $q_{0}$ and all $z_{1},z_{2}\in(-\delta,\delta)$, $z_{1}>z_{2}$ if
$\delta\in(0,\delta_{0})$ is sufficiently small. Let $\xi=(\dot{q},\dot{z})$ be a tangent
vector at point $(q,z)$ where $z\in(0,\delta)$. Then \begin{gather*} \left\Vert
\xi\right\Vert^{2}=\sum_{i,j=1}^{m-1}b_{ij}(q,z)\dot{q}_{i}\dot{q}_{j}+\dot{z}^{2}\ge\\
ge  \sum_{i,j=1}^{m-1}
b_{ij}(q,z)\dot{q}_{i}\dot{q}_{j}\ge\sum_{i,j=1}^{m-1}b_{ij}(q,0)\dot{q}_{i}\dot{q}_{j}=\left\Vert
(\dot{q},0)\right\Vert^{2} =\left\Vert P_{\Omega\ast}\xi\right\Vert^2.\end{gather*}

Let us introduce a smooth approximation of projection mapping in a following
way. For $\varepsilon\in(0,\delta)$ define\begin{gather*}
\varpi_{\varepsilon}(z):=\begin{cases} \exp\left(1/z-1/(z+\varepsilon)\right), &
z\in(-\varepsilon,0),\\ 0, & z\in\mathbb{R}\setminus(-\varepsilon,0),\end{cases}\\
Z_{\varepsilon}(z):=\intop_{-\varepsilon}^{z}\frac{\intop_{s}^{0}
\varpi_{\varepsilon}(t)\mathrm{d}t}{\intop_{-\varepsilon}^{0}\varpi_{\varepsilon}(t)\mathrm{d}t}\mathrm{d}s-\varepsilon,
\quad z\in(-\delta_{0},\delta_{0})\end{gather*} Obviously that the function
$Z_{\varepsilon}(\cdot)$ is smooth, its derivative, $Z_{\varepsilon}^{\prime}(z)$,
equals 1 for $z\in(-\delta_{0},-\varepsilon]$, monotonically decreases from 1 to 0
on $[-\varepsilon,0]$, and equals 0 for $z\ge0$. From this it follows that
$Z_{\varepsilon}(z)$ equals $z$ for $z\in(-\delta_{0},-\varepsilon]$ monotonically
increases from $-\varepsilon$ to $Z_{\varepsilon}(0)\in(-\varepsilon,0)$ on
$[-\varepsilon,0]$, and equals $Z_{\varepsilon}(0)$ for $z\in[0,\delta_{0})$. Now
locally define \begin{gather*} P_{\varepsilon,\Omega}(q,z):=\begin{cases}
(q,Z_{\varepsilon}(0)) & \text{if}\quad z\in(0,\delta_{0}),\\
(q,Z_{\varepsilon}(z)) & \text{if}\quad z\in(-\delta_{0},0]\end{cases}\end{gather*}
and for each point $x\in\Omega$ such that $B(x;\delta)\subset\Omega$ put
$P_{\varepsilon,\Omega}(x)=x$. Since $Z_{\varepsilon}(0)<0$, then \[
P_{\varepsilon,\Omega}(\Omega+\delta)\subset\Omega\] and since
$\left|Z_{\varepsilon}^{\prime}(z)\right|\le1$, then for any $z\in(-\delta,\delta)$, and for
any tangent vector $\xi=(\dot{q},\dot{z})$ at point $(q,z)$ we have \begin{gather*}
\left\Vert
\xi\right\Vert^{2}=\sum_{i,j=1}^{m-1}b_{ij}(q,z)\dot{q}_{i}\dot{q}_{j}+\dot{z}^{2}\ge
\sum_{i,j=1}^{m-1}b_{ij}(q,Z_{\varepsilon}(z))\dot{q}_{i}\dot{q}_{j}+\left(Z_{\varepsilon}^{\prime}(z)\dot{z}\right)^{2}
=\\ =\left\Vert (\dot{q},Z_{\varepsilon}^{\prime}(z)\dot{z})\right\Vert^{2}=\left\Vert
P_{\varepsilon,\Omega\ast}\xi\right\Vert.\end{gather*}
 From this it follows that \begin{gather}
\left\Vert P_{\varepsilon,\Omega\ast}\xi\right\Vert \le\left\Vert \xi\right\Vert \quad\forall x\in\Omega+\delta,\;\forall\xi\in T_{x}\mathcal{M}.\label{eq:PepsOm}\end{gather}
 Besides, the Hypothesis (H3) implies

\begin{equation}
W(\varphi,P_{\varepsilon,\Omega}x)\leq W(\varphi,x)\quad\forall\varphi\in\mathbb{T}^{m},\,\forall
x\in\Omega+\delta\label{W}\end{equation} for sufficiently small $\delta$ and
$\varepsilon\in(0,\delta)$.

2. Minimization of functional $J$ on $\mathcal{S}_{\Omega+\delta}$.
Obviously that the functional $J$ restricted to $S_{\Omega+\delta}$
is bounded from below. Let us show that\begin{gather}
J_{\ast}:=\inf J[\mathcal{S}_{\Omega+\delta}]=\inf J[\mathcal{S}_{\Omega}].\label{eq:infJ}\end{gather}
In fact, if $v_{j}(\cdot)\in\mathcal{S}_{\Omega+\delta}$ is such
a sequence that $J[v_{j}]$ monotonically decreases to $J_{\ast}$,
then \eqref{eq:PepsOm} and \eqref{W} implies \begin{gather*}
J_{\ast}\le J[P_{\varepsilon/j,\Omega}v_{j}]\le J[v_{j}].\end{gather*}
Hence, the sequence $u_{j}(\cdot):=P_{\varepsilon/j,\Omega}v_{j}(\cdot)$
is minimizing both for $J\bigl|_{S_{\Omega}}$ and for $J\bigl|_{\mathcal{S}_{\Omega+\delta}}$.

3. Convergence of minimizing sequence to a weak solution. Let
$u_{j}(\cdot)\in\mathcal{S}_{\Omega}$ be a minimizing sequence for $J\bigl|_{\mathcal{S}_{\Omega}}$.
Without loss of generality, we may consider that \begin{gather} \left\Vert
D_{\omega}u_{j}\right\Vert_{0}^{2}\le
M:=\frac{2}{(2\pi )^k}\sup_{x\in\Omega}\intop_{\mathbb{T}^{k}}W(\varphi,x)\mathrm{d}\varphi-
\frac{2}{(2\pi )^k}\intop_{\mathbb{T}^{k}}\inf_{x\in\Omega}W(\varphi,x)\mathrm{d}\varphi.\label{eq:boundDuj}\end{gather}
Let $h_{j}(\cdot)\in\mathrm{C}^{\infty}\left(\mathbb{T}^{k}\!\mapsto\! T\mathcal{M}\right)$ be a sequence of
smooth mappings such that $h_{j}(\varphi)\in T_{u_{j}(\varphi)}\mathcal{M}$ for any
$\varphi\in\mathbb{T}^{k}$ and besides there exist positive constants $K,\; K_{1}$
such that \begin{gather} \left\Vert h_{j}\right\Vert_{1}\le K_{1},\quad\left\Vert
h_{j}(\varphi)\right\Vert \le K\quad\forall\varphi\in\mathbb{T}^{k}, \quad\forall
j=1,2,\ldots\label{eq:secvf}\end{gather} Let us show that \begin{gather}
\lim_{j\to\infty}J^{\prime}[u_{j}](h_{j})=0.\label{eq:limJ'}\end{gather}
 On one hand, $J[u_{j}]$ decreases to $J_{\ast}:=\inf J[\mathcal{S}_{\Omega}]$.
On the other hand, for sufficiently small $s_{0}\le1$ and for any
$j\in\mathbb{N}$ there exists a number $\theta_{j}\in[-s_{0},s_{0}]$ such that
\begin{gather*} J[\exp_{u_{j}}(sh_{j})]=J[u_{j}]+sJ^{\prime}[u_{j}](h_{j})+
\frac{s^{2}}{2}\frac{\mathrm{d}^{2}}{\mathrm{d}s^{2}}\Bigl|_{s=\theta_{j}}J[\exp_{u_{j}}(sh_{j})]\\
\forall s\in[-s_{0},s_{0}],\quad\forall j\in\mathbb{N},\end{gather*} and,
besides, there exists a constant $K_{2}>0$ such that \begin{gather*}
\left|\frac{\mathrm{d}^{2}}{\mathrm{d}s^{2}}J[\exp_{u_{j}}(sh_{j})]\right|\le K_{2}\quad\forall
s\in[-s_{0},s_{0}],\quad\forall j\in\mathbb{N}.\end{gather*}
 If now we suppose that $\limsup_{j\to\infty}\left|J^{\prime}[u_{j}](h_{j})\right|>0$
then one can choose $j$ and $s_{j}\in[-s_{0},s_{0}]$ in such a way
that \begin{gather*}
\exp_{u_{j}}(s_{j}h_{j})\in\mathcal{S}_{\Omega+\delta},\quad J[\exp_{u_{j}}(s_{j}h_{j})]<J_{\ast}.\end{gather*}
 Thus, in view of \eqref{eq:infJ}, we arrive at contradiction with
definition of $J_{\ast}$.

Now by Theorem~\ref{thm:Jprop} for any pair $u_{i+j}(\cdot),\; u_{j}(\cdot)$
there exists a vector field $h_{ij}(\cdot)$ along $u_{j}(\cdot)$
such that \begin{gather*}
J[u_{i+j}]-J[u_{j}]-J^{\prime}[u_{j}](h_{ij})\ge\frac{\varkappa c^{2}}{2}\intop_{\mathbb{T}^{k}}\rho^{2}(u_{j},u_{i+j})\mathrm{d}\varphi\ge
\\ \ge \frac{(2\pi)^{k}\varkappa c^{2}}{2}\left\Vert u_{i+j}-u_{j}\right\Vert_{0}^{2}.\end{gather*}
Since \eqref{eq:limJ'} implies $J^{\prime}[u_{j}](h_{ij})\to0$ as $j\to\infty$, then
the sequence $u_{j}(\cdot)$ is fundamental in
$\mathrm{H}(\mathbb{T}^{k}\!\mapsto\!\mathbb{E}^{n})$ and in view of \eqref{eq:boundDuj}
converges to a function $u_{\ast}(\cdot)$ strongly in
$\mathrm{H}(\mathbb{T}^{k}\!\mapsto\!\mathbb{E}^{n})$ and weakly in
$\mathrm{H}_{\omega}^{1}(\mathbb{T}^{k}\!\mapsto\!\mathbb{E}^{n})$. Without loss of
generality we may consider that $u_{\ast}(\cdot)$ is defined by a minimizing sequence
which converges a.e.

Now it remains only to prove that $u_{\ast}(\cdot)$ is a weak solution, i.e. that
there holds \eqref{eq:varJeq0}. Let $h(\cdot)$ be a vector field along $u_{\ast}(\cdot)$.
By definition, there exists a sequence of smooth mappings $h_{j}(\varphi)\in
T_{u_{j}(\varphi)}\mathcal{M}$ which satisfies \eqref{eq:secvf} and \eqref{eq:limJ'}. Then,
in view of \eqref{eq:boundDuj}, we get \begin{gather*} \lim_{j\to\infty}\left|\left\langle
D_{\omega}u_{\ast},D_{\omega}h\right\rangle_{0}-\left\langle D_{\omega}u_{j},D_{\omega}h_{j}\right\rangle_{0}\right|\le\\ \le
\lim_{j\to\infty}\left|\left\langle
D_{\omega}\left(u_{\ast}-u_{j}\right),D_{\omega}h\right\rangle_{0}\right|+\sqrt{M}\lim_{j\to\infty}\left\Vert
D_{\omega}\left(h-h_{j}\right)\right\Vert_{0}=0,\end{gather*}
 and by the Lebesgue theorem\begin{gather*}
\lim_{j\to\infty}\intop_{\mathbb{T}^{k}}\left[W\left(\varphi,u_{j}(\varphi)\right)-W\left(\varphi,u_{\ast}(\varphi)\right)\right]\mathrm{d}\varphi=0.\end{gather*}
Hence, \begin{gather*}
J^{\prime}[u_{\ast}](h)=\lim_{j\to\infty}J^{\prime}[u_{j}](h_{j})=0.\end{gather*}

\end{proof}


\begin{thebibliography}{10}

\bibitem{Blo88}J~Blot. \emph{Calculus of variations in mean and
convex Lagrangians,} Int. J. Math. Anal. Appl., \textbf{134} (1988),
no.~2, 312--321.

\bibitem{Blo89}J~Blot. \emph{Calculus of variations in mean and
convex Lagrangians} II, Bull. Austral. Math. Soc., \textbf{40} (1989),
457--463.

\bibitem{Blo89A}J~Blot. \emph{Calculus of variations in mean and
convex Lagrangians} III, Israel J. Math., \textbf{67} (1989), no.~3,
337--344.

\bibitem{Blo93}J~Blot. \emph{Almost periodically forced pendulum},
Funkc. Ekvac. Serio Internacia, \textbf{36} (1993), 235--250.

\bibitem{BerZha96} M.S. Berger and Luping Zhang. \emph{A new method
for large quasiperiodic nonlinear oscillations with fixed frequencies
for nondissipative second order conservative systems of second type},
Commun. Appl. Nonlinear Anal., \textbf{3} (1996), no.~1, 25--49.

\bibitem{Maw98}J. Mawhin, \emph{Bounded and almost periodic solutions
of nonlinear differential equations: variational vs nonvariational
approach}, in: Calculus of Variations and Differential Equations (Haifa,
1998), Chapman \& Hall/CRC Research Notes in Mathematics, \textbf{410},
Boca Raton, Raton, FL, 2000, 167--184.

\bibitem{ZakPar99}S. F. Zakharin and I. O. Parasyuk. \emph{Generalized
and classical almost periodic solutions of Lagrangian systems}, Funkc.
Ekvac., \textbf{42} (1999), 325--338.

\bibitem{Ort09}R.~Ortega. \emph{The pendulum equation: from periodic
to almost periodic forcings}, Diff. Int. Equat, \textbf{22} (2009),
no.~9--10, 801--814.


\bibitem{ZakPar99A} S.F. Zakharin and I.O. Parasyuk. \emph{Generalized
quasiperiodic solutions of Lagrangian systems on Riemannian manifolds
of nonpositive curvature}, Bull. Kyiv. Univ. (Visnyk Kyiv. Univ.),
(1999), Iss.~3, 15--20. (in Ukrainian)

\bibitem{ZakPar99B}S.F. Zakharin and I.O. Parasyuk. \emph{On smoothness
of generalized quasiperiodic solutions of Lagrangian systems on Riemannian
manifolds of nonpositive curvature}, Nonlinear Oscillations (Nelinijni
Kolyvannya), \textbf{2} (1999), no.~2, 180-193. (in Ukrainian)

\bibitem{Nash56}J. Nash. \emph{The imbedding problem for Riemannian
manifolds}, Ann. of Math. \textbf{63}, no.~1 (1956), 20–63.

\bibitem{BisCrit64} R.L. Bishop and R.J. Crittenden. \emph{Geometry
of manifolds}, Academic Press, New York -- London (1964).

\bibitem{GKM71}D. Gromol, W. Klingenberg and W. Meyer. \emph{Riemannsche
geometrie im grossen}, Springer-Verlag, Heidelberg-New York (1968).

\bibitem{DNF84}B. A. Dubrovin, A. T. Fomenko and S. P. Novikov. \emph{Modern
Geometry. Methods and Applications}, Springer-Verlag, GTM 93, Part
1, (1984).

\end{thebibliography}
\end{document}